\def\ket#1{| #1 \rangle}
\def\bra#1{\langle #1 |}
\def\kb#1#2{| #1 \rangle\!\langle #2 |}
\def\cG{\mathcal{G}}
\def\cH{\mathcal{H}}
\def\cL{\mathcal{L}}
\newtheorem{theorem}{Theorem}
\newtheorem{corollary}{Corollary}
\newtheorem{lemma}{Lemma}
\newtheorem{definition}{Definition}
\def\eq#1{Eq.~\eqref{eq:#1}}
\def\eq#1{Eq.~\eqref{eq:#1}}
\def\ket#1{| #1 \rangle}
\def\bra#1{\langle #1 |}
\def\ketbra#1#2{|#1\rangle\!\langle#2|}
\def\bracket#1#2{\langle #1 | #2 \rangle}
\def\kb#1#2{| #1 \rangle\!\langle #2 |}
\def\eq#1{Eq.~\eqref{eq:#1}}
 \def\comment#1{}		%% Uncomment to delete all comments (& comment out previous line)
  \def\check#1{#1}		%% Uncomment to accept all red (``check'') text, convert to black (& comment out previous line)
\begin{document}
\title{Quantum Language Processing}
\author{Nathan Wiebe}
\affiliation{Microsoft Research, One Microsoft Way, Redmond WA 98052}
\author{Alex Bocharov}
\affiliation{Microsoft Research, One Microsoft Way, Redmond WA 98052}
\author{Paul Smolensky}
\affiliation{Dept of Cognitive Science Johns Hopkins University, Baltimore MD 21218}
\affiliation{Microsoft Research, One Microsoft Way, Redmond WA 98052}
\author{Matthias Troyer}
\affiliation{Microsoft Research, One Microsoft Way, Redmond WA 98052}
\author{Krysta M. Svore}
\affiliation{Microsoft Research, One Microsoft Way, Redmond WA 98052}
\begin{abstract}
We present a representation for linguistic structure that we call a Fock-space representation, which allows us to embed problems in language processing into small quantum devices. We further develop a formalism for understanding both classical as well as quantum linguistic problems and phrase them both as a Harmony optimization problem that can be solved on a quantum computer which we show is related to classifying vectors using quantum Boltzmann machines.  We further provide a new training method for learning quantum Harmony operators that describe a language.  This also provides a new algorithm for training quantum Boltzmann machines that requires no approximations and works in the presence of hidden units.  We additionally show that quantum language processing is \BQP-complete, meaning that it is polynomially equivalent to the circuit model of quantum computing which implies that quantum language models are richer than classical models unless $\BPP=\BQP$.  It also implies that, under certain circumstances, quantum Boltzmann machines are more expressive than classical Boltzmann machines.  Finally, we examine the performance of our approach numerically for the case of classical harmonic grammars and find that the method is capable of rapidly parsing even non-trivial grammars.  This suggests that the work may have value as a quantum inspired algorithm beyond its initial motivation as a new quantum algorithm.
\end{abstract}
\maketitle

\section{Introduction}

This paper develops an approach to natural language processing for quantum computing. The approach is based in artificial neural networks, because like quantum computers, neural network computers are dynamical systems with state spaces that are high-dimensional vector spaces. The method proposed here follows a general neural network framework for artificial intelligence and cognitive science called Gradient Symbolic Computation (GSC) \cite{smolensky2006harmonic,smolensky2014optimization}. Since GSC takes its starting point from quantum mechanics, this work amounts to the closing of a conceptual circle.

Quantum computation has in recent years been applied to address a host of problems in cryptography~\cite{shor1999polynomial}, simulation of physical systems~\cite{lloyd1996universal,whitfield2011simulation,jordan2012quantum} and machine learning~\cite{aimeur2006machine,rebentrost2014quantum,wiebe2016quantum,kerenidis2017quantum,gilyen2017optimizing}.  The advantages of these methods stem from a number of different sources, including a quantum computer's ability to manipulate exponentially large state vectors efficiently and manipulate quantum interference to improve on statistical sampling techniques.  While techniques such as quantum gradient descent~\cite{kerenidis2017quantum,gilyen2017optimizing} and amplitude amplification~\cite{brassard2002quantum} could be used to provide advantages to performing gradient symbolic computation for language processing, as yet this application remains underdeveloped and furthermore the challenges of preparing the necessary states on a quantum computer makes direct applications of these techniques challenging.  For this reason, new representations for language would be highly desirable for applications in language processing.

The paper addresses two fundamental aspects of language processing: the generation of grammatical symbol sequences (along with their constituent structure parse trees), and the determination of the grammaticality of a given symbol sequence, given a grammar. In the paper, after the relevant aspects of Gradient Symbolic Computation are summarized in Section~\ref{sec:TPRs}, a representational schema for encoding parse trees in a quantum computer is proposed in Section~\ref{sec:Fock}, which identifies a connection between language processing and quantum error correction. In Section~\ref{sec:HOps}, the Hamiltonian of the proposed quantum computer---a type of Boltzmann machine---is related to the grammar that it processes. Then Section~\ref{sec:Learning} takes up the problem of learning the parameters of a quantum computer that processes according to an unknown grammar. Both the unsupervised and supervised learning problems are treated, and the complexity of the proposed learning algorithms are presented. Section~\ref{sec:HMaxNumerics} presents numerical simulations of the generation of sentences in formal languages, which are specified by a given set of symbol-rewriting rules. This amounts to an optimization problem, because in Gradient Symbolic Computation, the grammatical sentences are those that maximize a well-formedness measure called Harmony. Harmony values are physically realized as expectation values of the negative Hamiltonian of the quantum computer.

It should be emphasized that the analyses of supervised learning presented here (in particular, the  computation of the gradient in Theorem~\ref{thm:deriv} and the complexity result in Theorem~\ref{thm:Complexity}) are not restricted to language processing: they apply to supervised training of any quantum Boltzmann machine.

\section{Tensor Product Representations}
\label{sec:TPRs}

	The core of the neural network framework deployed here, Gradient Symbolic Computation (GSC), is a general technique called Tensor Product Representation (TPR) for embedding complex symbol structures in vector spaces \cite{smolensky1990tensor}. For our language applications, the relevant type of symbol structure is a binary parse tree, a structure that makes explicit the grouping of words into small phrases, the grouping of smaller phrases into larger phrases, and so on recursively up to the level of complete sentences, as in $[_{\rm S}~ [_{\rm NP}~ this] [_{\rm VP}~ [_{\rm V}~ is] [_{\rm NP}~ [_{\rm Det}~ an] [_{\rm AP}~ [_{\rm A}~ English] [_{\rm N}~ sentence]]]]]$ \cite{jurafsky2014speech}.  This bracketed string denotes the binary tree shown in Figure \ref{fig:tree}. Each labeled node in the tree is a constituent.

	In one type of TPR embedding---which uses `positional roles'---the vector that embeds a symbol structure ($\mathbf{S}$) is the superposition of vectors embedding all the structure's constituents, and the vector embedding a constituent---a tree node labeled with a symbol---is the tensor product of a vector embedding the symbol ($\mathbf{s}_{i}$) and a vector embedding the position of the node within the tree ($\mathbf{n}_{i}$):
	$\mathbf{S} = \sum_{i} \mathbf{s}_{i} \otimes \mathbf{n}_{i}$
	A position in a binary tree can be identified with a bit string, such that 011 denotes the left (0) child of the right (1) child of the right child of the tree root. (An `only child' is arbitrarily treated as a left child, and the root is identified with the empty string $\varepsilon$.) Thus in this positional-role TPR, the vector that embeds the parse tree for \emph{this is an English sentence} is, in Dirac notation:
\begin{equation}
\label{eq:tree}
\ket{\psi} = \ket{{\rm S}} \ket{\varepsilon} + \ket{{\rm NP}}\ket{0} + \ket{\rm this}\ket{00} + \ket{{\rm VP}} \ket{1}+\ket{{\rm V}} \ket{01}+\ket{{\rm  is}}\ket{001}+\ket{{\rm NP}}\ket{11}+ \cdots
\end{equation}

\begin{figure}
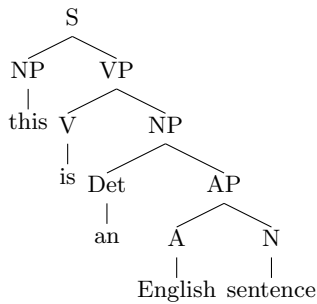

\hspace{-2in}
\Tree [.S [.NP this ] [.VP [.V is ] [.NP [.Det an ] [.AP [.A English ] [.N sentence ] ] ] ] ] %% UNTODO alex temporarily commented out
\caption{A parse tree for \emph{this is an English sentence}. S = sentence, N = noun, V = verb, Det = determiner, A = adjective, P = phrase.}
\label{fig:tree}
\end{figure}

The vectors $\{\ket{\rm S}, \ket{\rm NP}, \ket{\rm this}, \ldots\}$ lie in a vector space $V_S$  hosting the embedded symbols, while the vectors  $\{\ket{\epsilon},\ket{0},\ket{1},\ket{01},\ldots\}$ lie in a vector space $V_N$ hosting the embedded nodes; $\ket{\psi}$ then lies in the tree-embedding space $V_T := V_S \otimes V_N$. Letting the embedding of nodes be recursive, we have $\ket{101} = \ket{0}\ket{1}\ket{1}$. Thus, letting $V_0$ be the vector space spanned by $\{\ket{0}, \ket{1}\}$, we have $\ket{101} \in V_0 \otimes V_0 \otimes V_0 = V_0^{\otimes 3}$. Similarly we have that the vector space of embeddings of all nodes, $V_N$, is the direct sum of the vector spaces containing the  vectors embedding nodes of all depths:
\begin{equation}
V_N = \bigoplus_{d=0}^\infty V_0^{\otimes d},
\end{equation}
as in a multi-particle state space where $V_0^{\otimes d}$ is the space of $d$ particles (and $V_0$ is the single-particle state space). Here, $d$ is the depth of a node in a tree.

	It has been shown that using such TPRs and purely neural network computation, it is possible to compute families of recursive symbolic functions mapping binary trees to binary trees that are relevant to language processing \cite{smolensky2012symbolic}. (That is, for such a function $f$, a neural network can map the embedding of tree $T$ to the embedding of tree $f(T)$.) For the family of functions that are the closure of the primitive tree-manipulating operations---extract left/right subtree, merge two subtrees into a single tree---linear neural networks suffice: such a function can be computed by a single matrix-multiplication \cite{smolensky2006harmonic}.
	
	The general state in $V_{T}$ is not the embedding of a single tree but rather the weighted superposition of embeddings of trees. Thus if $\ket{\psi}$ is the embedding of the parse tree of \emph{this is an English sentence} given in Equation \ref{eq:tree}, and $\ket{\phi}$ is the corresponding parse tree of \emph{this is an American sentence}, then one state in $V_{T}$ is $\ket{\chi} = \frac{1}{2} (\ket{\psi} + \ket{\phi})$. $\ket{\chi}$ embeds the Gradient Symbol Structure which is the parse tree of \emph{this is an }$\frac{1}{2}$(\emph{English }+ \emph{American})\emph{ sentence}; here node 00111 is labeled by the blend of two Gradient Symbols: $\frac{1}{2}$English and $\frac{1}{2}$American. In a general Gradient Symbolic Structure, nodes are labeled by linear combinations of symbols.
	
	In Gradient Symbolic Computation, a grammar is a function that measures the degree of well-formedness of a state in a neural network; this is a Lyapunov function. In a Hopfield net, the network dynamics minimizes a function called the `energy' \cite{hopfield1982neural}; in GSC, the network dynamics maximizes a function $H$ called the `Harmony' \cite{smolensky1986information}). %In the quantum computer corresponding to a GSC network, $H$ corresponds to the negative of the Hamiltonian, which will be called the Harmony operator $\cH$; the Harmony of a state $\ket{\eta}$ is $H(\ket{\eta}) := \bra{\eta} \cH \ket{\eta}$ \check{where $\cH$ is a diagonal matrix}.
%\comment{Actually, it seems to me that $\cH$ is not diagonal when positional-role TPRs are used; the matrix is essentially the weight matrix of a Boltzmann machine which is not diagonal. If this is correct we can just drop the preceding blue text.}
 The connection between the network well-formedness function $H$ and grammar derives from $H$ being a linear combination of grammatical constraint functions $f_C$ each of which measures the degree to which a state violates a grammatical requirement: such an $H$ is called a Harmonic Grammar \cite{legendre1990harmonic}. Thus given the constraint $C_1 :=$ `a sentence has a subject' \cite{GrimshawSamek98}, the vector $\ket{\xi}$ embedding the parse tree of \emph{is an English sentence} violates $C_1$ once, hence $f_{C_1}(\ket{\xi}) = 1$. The coefficient of $f_{C_1}$ in $H$, $w_{C_1}$, is a negative quantity so the missing subject lowers Harmony by $|w_{C_1}|$; this is the strength of the constraint $C_{1}$ in the Harmonic Grammar $H$. Harmonic Grammars have proved to be valuable in analyzing natural languages \cite{pater2009weighted}\footnote{{Especially valuable are the special Harmonic Grammars in which each $w_{C_{k}}$ exceeds the maximal possible Harmony penalty arising from the linear combination of all the constraints weaker than $C_{k}$: these are the grammars of Optimality Theory \cite{prince1993optimality, prince1997optimality}. In such a grammar, numerical weighting is no longer required: only the ranking of constraints from strongest to weakest matters for computing which of two structures has higher Harmony. A structure is optimal---grammatical---iff no other structure has higher Harmony.}}.

	The well-formed---i.e., grammatical---sentences are those with globally-maximal Harmony. In a neural network, these can be computed via simulated annealing, in which the stochastic network state follows a Boltzmann distribution $p_T(x) \propto e^{H(x)/T}$; during computation, $T \rightarrow 0$. (Such networks are Boltzmann Machines \cite{ackley1985learning} or Harmony Networks \cite{cho2017incremental}.) Such Boltzmann distributions also describe the states of interest in the quantum analog of the Harmonic Grammar $H$.
	
	As well as describing natural languages, Harmonic Grammars can also describe formal languages \cite{smolensky1993harmonic}, that is, sets of symbol sequences derived by repeated application of rewrite rules such as $\cG_1 := \{S \rightarrow a S b, S \rightarrow a.b\}$, which generates the formal language $\cL_1 := \{a^n.b^n | n = 1, 2, \ldots\}$ (with parse trees $T_1 := \{[_S~a~[_S~a \cdots [_S~a~.~b]~b] \cdots b] \}$). The Harmonic Grammar $H_{\cG_1}$ corresponding to $\cG_1$ will be defined precisely below, but briefly, $H_{\cG_1}$ assigns negative Harmony to all ungrammatical sequences of $a$s and $b$s, and assigns maximal Harmony---$H = 0$---to all grammatical parse trees, $T_1$. One set of constraints assigns negative Harmony to the presence of $a$ or $b$, while another set of constraints assigns positive Harmony to tree configurations that accord with the grammar rewrite rules, e.g., a mother/left-daughter pair in which the mother node is labeled $S$ and the daughter node $a$. The contributions of the negative-Harmony constraints are cancelled by the contributions of the positive-Harmony constraints only for grammatical trees.
	
	TPRs provide a highly general framework for embedding in vector spaces symbol structures of virtually any type, not just trees. In general, a type of symbol structure is characterized by a set of \emph{roles}, each of which can be bound to \emph{fillers}. In the TPR scheme discussed so far, the fillers are symbols and the roles are the tree nodes, that is the positions in the tree. This is an instance of \emph{positional roles}. There is another mode of deploying TPRs which becomes of interest for quantum computing: this method deploys `contextual roles'. Rather than characterizing the role of a symbol in a structure by the position it occupies, we characterize the role of a symbol by its context of---say, $\gamma = 2$---surrounding symbols. For the sequence $abcd$, rather than identifying $b$ as the symbol in second position, we now identify it as the symbol preceded by $a$ and followed by $c$. So the embedding of $abcd$ is no longer $\ket{a}\ket{1} + \ket{b}\ket{2} + \ket{c}\ket{3} + \ket{d}\ket{4}$ but rather $\ket{abc} + \ket{bcd} = \ket{a}\ket{b}\ket{c} + \ket{b}\ket{c}\ket{d}$. For neural networks this contextual-role scheme quickly becomes prohibitive when the number of symbols is large (such as the number of English words).
	
	In the limit, the context size $\gamma$ used to characterize the role of a symbol is large enough to encompass the entire structure. In this limit, for a binary tree, the tree positions are enumerated---$ (p_k)_{k=1,2,\ldots} := (\epsilon, 0, 1, 00, 01, 10, 11, \ldots)$---and then a tree with symbols $(s_k)$ in positions $(p_k)$ is embedded as the vector $\ket{s_1s_2s_3\ldots}=\ket{s_1}\ket{s_2}\ket{s_3}\cdots$. It is such a maximal-contextual-role TPR we deploy below for quantum computation.
	
	Although the state space required is typically considerably larger for contextual than for positional roles, the contextual role scheme has a significant advantage over the positional scheme: superposition can preserve identity. If we superimpose $\ket{abc}$ and $\ket{xyz}$, in the positional scheme we lose the identity of the two sequences, since then $\ket{abc} + \ket{xyz} = (\ket{a} + \ket{x})\ket{1} +(\ket{b} + \ket{y})\ket{2} + (\ket{c} + \ket{z})\ket{3} = \ket{ayz} + \ket{xbc} = \ket{xbz} + \ket{ayc} = \cdots$. But in the contextual scheme, $\ket{abc} + \ket{xyz} = \ket{a}\ket{b}\ket{c}+ \ket{x}\ket{y}\ket{z}$, which is unambiguous.

\section{Fock Space Representations}
\label{sec:Fock}

Despite the presence of the tensor product structure exploited by positional-role tensor product representations (pTPRs) for language, implementing them directly on quantum computers can be a challenge.  This is because the natural representation of a pTPR would be as a quantum state vector.  While such a quantum state vector could be expressed using a very small number of quantum bits, the manipulations needed to manipulate these state vectors to maximize Harmony are non-linear.  Since quantum computers cannot deterministically apply non-linear transformations on the state, this optimization involves non-deterministic operations that can require prohibitive amounts of post-selection in order to apply.

For this reason, we propose using TPRs deploying maximal contextual roles for encoding language structures in a quantum computer.  This will be called a Fock-space representation.  The idea behind the Fock space is that we consider each role that could be filled within the representation as a tensor factor within structures built from a decomposition of  this space.  This is different from pTPR structures wherein linear combinations of tensor products are used to represent symbol structures.  Here every possible combination of roles and fillers are described using a tensor product of simpler terms.  For example, if there are $R$ roles then the basis for this Fock space can be expressed
%as
%\begin{equation}
%v=v_1 \otimes v_2 \otimes \cdots \otimes v_R,
%\end{equation}
%where each $v_j$ is a vector that can bind a filler.  In analogy to quantum mechanics, we find it useful to express this
in Dirac notation as:
\begin{equation}
\ket{v}=\ket{f_1}\cdots \ket{ f_R},
\end{equation}
where each $f_{i}$ is the filler (symbol) bound to positional role $r_{i}$, an ordering of the roles $r_{1}, \ldots, r_{R}$ having been imposed. Thus $\ket{abc} = \ket{a}\ket{b}\ket{c}$.

There are many ways that one could define the basis.  The following convention is proposed here.  Let $\ket{0}$ represent a positional role that does not have a filler, or equivalently let $\ket{0}$ represent a positional role that is filled with the empty symbol ``$0$''.  Next, let $a^\dagger_{f,r} \ket{0}$ be the vector that stores the filler $f$ in the positional role $r$.  This means that any basis vector in the space of fillers on $R$ roles can be written as
\begin{equation}
\ket{v} = a^\dagger_{f_1}\ket{0}\otimes \cdots \otimes a^\dagger_{f_R} \ket{0}:=a^\dagger_{f_1,r_{1}}\cdots a^\dagger_{f_R,r_{R}} \ket{0}
\end{equation}
Here each $a^\dagger_{f_i,r_{j}}$ is a binding operator, which acts in exact analogy to the creation operator in quantum mechanics.  Similarly, define each $a_{f_i,r_{j}}$ to be the corresponding unbinding operator, which maps a bound role back to an unbound role and is the Hermitian transpose of $a^\dagger_{f_1,r_{1}}$.  The properties of binding and unbinding operators are summarized below.
$$
\begin{array}{|c|c|}
\hline
\textbf{Properties of binding operators} &\\
\hline
\textbf{Linearity}& a^\dagger_{f,r} (\alpha\ket{\psi} +\beta\ket{\phi}) = \alpha a^\dagger_{f,r}\ket{\psi} +\beta a^\dagger_{f,r}\ket{\phi},~\forall~f,r,\ket{\psi}, \ket{\phi} \text{ and }\alpha,\beta \in \mathbb{C}\\
\textbf{Distributivity} & a^\dagger_{f,r}( a^\dagger_{f',r'} + a^\dagger_{f'',r''}) = a^{\dagger}_{f,r}a^\dagger_{f',r'} + a^\dagger_{f,r}a^\dagger_{f'',r''}~\forall~r,f,r',f',r'',f''\\
%\textbf{Nilpotence}& {a^\dagger}^2 = a^2 =0\\
\textbf{Unique Binding (Nilpotence)}& a^\dagger_{f,r}a^\dagger_{f',r} =0=a_{f,r}a_{f',r}~\forall~f,f',r\\
\textbf{Zero Expectation}& \bra{0}a^\dagger_{f,r}\ket{0}=0=\bra{0}a_{f,r}\ket{0}~\forall~f,r\\
\textbf{Number Operator} & \text{ For $n_{f,r} := a^\dagger_{f,r} a_{f,r}$, } n_{f,r} a^\dagger_{f,r} \ket{0} = a^\dagger_{f,r} \ket{0}\text{ and } n_{f,r} \ket{0}=0.\\
\hline
\hline
\textbf{ Classical binding operators} &\\
\hline
\textbf{Commutativity} & a^\dagger_{f,r} a^\dagger_{f',r'} = a^\dagger_{f',r'}a^\dagger_{f,r}~\forall~r,f,r',f'.\\
%\textbf{Explicit matrix representation} & a^\dagger_{r,f} = \left[a^{\dagger}_{f}\ket{0}\!\bra{0}\right]_r\otimes \openone,~a_{r,f} = \Big[\ket{0}\!\bra{0}a_f\Big]_r\otimes \openone\\
\hline
\end{array}
$$

{Although Fock space representations need to use classical binding operators, in fact
classical binding operators are special because Fock space representations in general require non-classical (or quantum) binding operators.}  We show this formally in the theorem below.

\begin{theorem}
Classical Fock space representations are generalizations of pTPRs in the following sense: for any pTPR encoding that uses a finite set of orthonormal role and filler vectors to encode structures in which a unique filler is assigned to each role, there exists an injective map from the space of pTPRs to Fock space representations, but there does not exist a bijective map.
\end{theorem}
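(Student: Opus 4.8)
The plan is to treat both ``spaces of representations'' as complex vector spaces and to read ``map'' as a linear (superposition-respecting) map, since these are the only maps that preserve the representational content of quantum states; as bare set maps between two vector spaces of continuum cardinality, bijections trivially exist, so the substance of the statement lives entirely in the \emph{linear} category. First I would fix notation: suppose the encoding uses $F$ orthonormal filler vectors and $R$ orthonormal role vectors. The positional-role space is then $V_S\otimes V_N$ with $\dim(V_S\otimes V_N)=FR$, while the Fock space is $V_0^{\otimes R}$, where the single-role space $V_0$ is spanned by the empty state $\ket 0$ together with the $F$ states $a^\dagger_{f}\ket 0$, so that $\dim V_0 = F+1$ and $\dim\bigl(V_0^{\otimes R}\bigr)=(F+1)^R$.

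For the injective map I would define $\Phi$ on the basis by $\Phi(\ket{f_i}\otimes\ket{r_j}) = a^\dagger_{f_i,r_j}\ket 0$ -- the Fock state carrying filler $i$ in role $j$ and leaving every other role empty -- and extend linearly. Injectivity is then immediate: the images are distinct basis vectors of $V_0^{\otimes R}$, hence orthonormal and in particular linearly independent, so $\Phi$ sends a basis to a linearly independent set. This $\Phi$ embeds the pTPR of a structure $\sum_j\ket{f_j}\ket{r_j}$ as the superposition $\sum_j a^\dagger_{f_j,r_j}\ket 0$; if one instead prefers the ``multiplicative'' semantic encoding $\prod_j a^\dagger_{f_j,r_j}\ket 0$, injectivity on the finite set of structures again follows from orthonormality of the product basis, but that correspondence is not linear, which is precisely why the linear embedding above is the cleaner witness.

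For the non-existence of a bijection, the clean rigorous argument is the dimension gap: a linear bijection would be an isomorphism, forcing $\dim(V_S\otimes V_N)=\dim(V_0^{\otimes R})$, i.e.\ $FR=(F+1)^R$; the binomial bound $(F+1)^R\ge 1+RF>FR$ shows this equality never holds, so no linear bijection exists while the injection above always does. I would then interpret this gap through the representational phenomenon emphasized earlier. The natural, structure-respecting correspondence that sends a Fock product state to its positional-role image is already many-to-one in the Fock-to-pTPR direction -- for instance $\ket{a}\ket{b}+\ket{x}\ket{y}$ and $\ket{a}\ket{y}+\ket{x}\ket{b}$ are distinct in the Fock space yet both collapse to the single pTPR $(\ket a+\ket x)\ket{r_1}+(\ket b+\ket y)\ket{r_2}$ -- so superposition destroys identity in the positional scheme, and the surplus $(F+1)^R-FR$ dimensions of the Fock space are exactly the empty-role configurations and the identity-preserving superpositions that $V_S\otimes V_N$ cannot keep apart.

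The step I expect to be the main obstacle is not the computation but the framing: one must make explicit that ``map'' means ``linear (superposition-respecting) map,'' since otherwise the bijection claim is false for trivial cardinality reasons. Once the linear category is fixed, the proof collapses to the strict inequality $FR<(F+1)^R$, and the only care needed is to state the filler and role counts and the dimension of the single-role space $V_0$ -- including the empty state $\ket 0$ -- correctly, so that the inequality holds uniformly in $F$ and $R$.
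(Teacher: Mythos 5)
Your proposal is correct, and its core argument for non-bijectivity---the dimension gap between the pTPR space and the Fock space---is the same as the paper's; indeed you tighten it, since the paper merely asserts that $MN \neq (N+1)^M$ while you prove the strict inequality $(F+1)^R \ge 1+RF > FR$ uniformly in $F,R$. Where you genuinely diverge is in the injection and in the framing. The paper maps each complete structure (one filler per role) to the \emph{product} state $a^\dagger_{s(M),r_M}\cdots a^\dagger_{s(1),r_1}\ket{0}$ in which every role is occupied, and argues injectivity on the set of structures from the fact that $a^\dagger_{f,r}\ket{0}=a^\dagger_{f',r'}\ket{0}$ iff $f=f'$ and $r=r'$; you instead define a \emph{linear} map on basis vectors, $\ket{f_i}\otimes\ket{r_j}\mapsto a^\dagger_{f_i,r_j}\ket{0}$, so that a structure $\sum_j\ket{f_j}\ket{r_j}$ lands on a superposition of single-excitation states rather than a product state. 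Your map has the advantage of being a bona fide injective linear map on all of $V_S\otimes V_N$, which is exactly what makes your dimension argument against a linear bijection airtight; the paper's map is semantically the more natural one (it is the encoding actually used throughout the rest of the paper, where a filled structure is a fully occupied Fock state), but as you correctly observe it is not linear, so it witnesses injectivity only at the level of structures. Your explicit insistence that ``map'' must mean ``linear map'' also repairs a looseness the paper leaves implicit: between infinite sets of equal cardinality, set-theoretic bijections always exist, so the paper's claim that differing dimensions preclude a surjection is only meaningful in the linear category you spell out. The one step of the paper you omit---the preliminary flattening of recursive role structure---is harmless, since your setup already posits a flat finite set of $R$ roles.
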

\begin{proof}
Without loss of generality, let us assume that there does not exist any recursive structure in the pTPR.  This can be done because for any component of the form $[\mathbf{A}\otimes \mathbf{r}_0  + \mathbf{B}\otimes \mathbf{r}_1]\otimes \mathbf{r}_2 \equiv \mathbf{A}\otimes \mathbf{r'}_0 + \mathbf{B}\otimes \mathbf{r'}_1$ by expanding the tensor products and redefining the roles.
%For this reason we will ignore the possibility of recursive embeddings in pTPRs.

Since there are a finite number of roles and fillers in a TPR, for concreteness let us assume that there exist $N$ possible fillers $\{\mathbf{A}_j: j=1,\ldots, N\}$ and $M$ possible roles $\{\mathbf{r}_{k}:k=1,\ldots, M\}$.  Similarly, let $s:\{1,\ldots, M\} \rightarrow \{1,\ldots,N\}$ be a sequence of fillers that are used to represent a fixed but arbitrary pTPR such as
\begin{equation}
\textbf{v}_{\rm pTPR} = \sum_{j=1}^M \mathbf{A}_{s(j)} \otimes \mathbf{r}_j.
\end{equation}

Now let us construct an equivalent vector within a Fock space representation.  For each binding $A_{s(j)} \otimes \mathbf{r}_j$ we can associate a classical binding operator $a^\dagger_{s(j),r_j}$ acting on a different tensor factor.  That is to say
\begin{equation}
\textbf{v}_{\rm pTPR} \mapsto a^\dagger_{s(M),r_M}\cdots a^\dagger_{s(1),r_1}\ket{0}.
\end{equation}
Since $a^\dagger_{f,r}\ket{0} = a^\dagger_{f',r'}\ket{0}$ if and only if $f=f'$ and $r=r'$, which follows
from the definition of  the binding operator,
it follows that two different pTPRs are mapped to the same Fock space representation if and only if they are the same vector.  Hence an injection exists between the representations.

A surjection, on the other hand cannot exist.  To see this, let us examine the dimension of the pTPR.  We have assumed that the pTPR exists in a vector space of dimension $MN$, which follows from the unique binding assumption.  On the other hand, the vector space for the Fock representation is of dimension $(N+1)^M$ (the base is $N+1$ rather than $N$ because of the presence of the vacuum symbol $0$).  Since the dimensions of the spaces are different, it is impossible to construct a surjective map from pTPR to Fock space representations, unless further restrictions are made on the vectors permitted by Fock space representations.  This completes our proof that Fock space representations are a generalization of pTPRs.
\end{proof}

At first glance the proof of the above theorem may seem to suggest that Fock space representations are less efficient than pTPRs.  In fact, even though the vector space that the Fock space equivalent of a pTPR lies in is exponentially larger, the memory needed to store the vector representing a given structure is equivalent.  Indeed, the existence of an injective mapping shows that a pTPR can be easily expressed in this form, revealing that there cannot be in principle a difference in the memory required between the two.

However, the exponentially larger space that the Fock-space representations lie in make this a much more convenient language to describe distributions over TPRs or uncertainty in the fillers that are assigned to given roles.  Just as probability distributions over $n$ bits live in $\mathbb{R}^{2^n}$, having the ability to work in an exponentially larger space makes it convenient for expressing uncertainty in the binding assignments.  This property also allows us to represent quantum distributions over bindings, which further makes this representation indispensable when looking at language processing on quantum computers.

\begin{figure}[t!]
\includegraphics[width=0.5\linewidth]{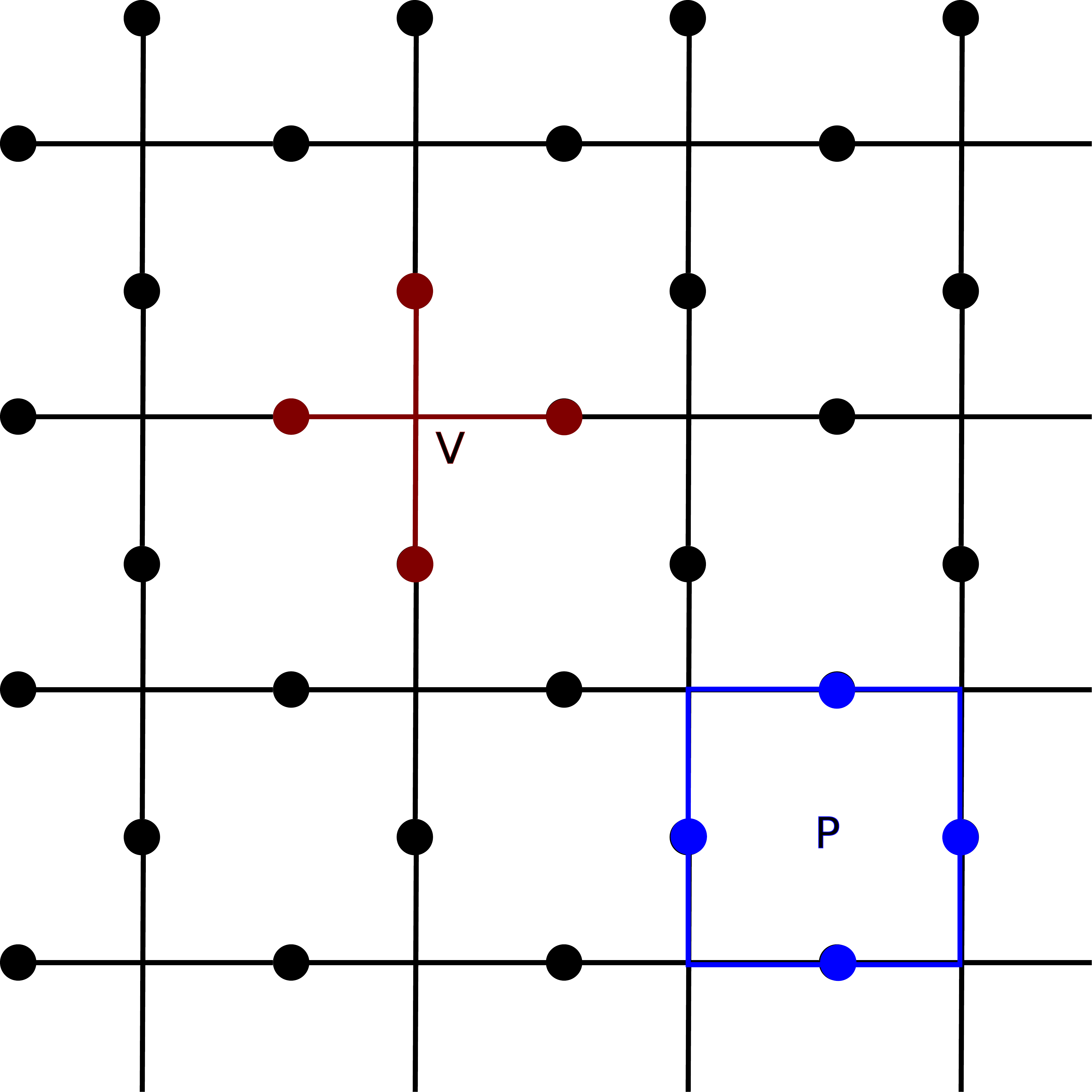}
\caption{Patch of surface code with circles denoting the occupation states for the Fock-space representation (equivalently qubits in the language of quantum computing).  The red cross gives an example of a `vertex' interaction between four occupation states and the blue square is an exemplar of the plaquette interaction between the four occupation states located on the perimeter of the square.  The sets $V$ and $P$ mentioned in the text consist of the union of all such vertex and plaquette subgraphs within the  surface code.}
\label{fig:toric}
\end{figure}

\section{Harmony operators}
\label{sec:HOps}
In Gradient Symbolic Computation, a Harmony function is optimized to determine whether a sentence is grammatical.  The objective within that framework is to find grammatical sentences by \check{globally} optimizing the Harmony, which is a measure of how well-formed a sentence is.  By convention, negative Harmony is associated with ungrammatical sentences and zero Harmony with grammatical sentences.

As an example, consider the following grammar, with fillers $\{S,A,B,.\}$, which generates strings of the form $A^n\ . \ B^n$ for any integer $n$, where $S$ is a start symbol and $\{A, ., B\}$ are terminal symbols.  We can represent this by building a ternary tree that takes the form of a herring bone and assigning roles $\{c_0,l_1,c_1,r_1,l_2,c_2,r_2,\ldots\}$ to the fillers (where $l, c, s$ denote `left, center, right' daughter nodes).  The simplest such tree generated by the grammar takes the form in a pTPR of $S\otimes c_0 + A\otimes l_1 + .\otimes c_1 + B\otimes r_1$.  A choice of Harmony function that works for this assigns Harmony $-3$ to $S$ in role $c_0$, Harmony $-4$ to $S$ placed in $c_d,~d>1$,  and a Harmony penalty of $-1$ for all other symbols.  Harmony bonuses of $+2$ are given if both $S\otimes c_{d-1}$ and any of $A\otimes l_d$, $S\otimes c_{d}$, $.\otimes c_d$, or $B \otimes r_d$ are  bound.  The Harmony of such a tree is then $0$ and thus it is grammatical.  The same rules can easily be generalized to arbitrarily long examples of this grammar.

In Fock space representations we also have the notion of Harmony but the concept of Harmony  needs to be more general in this framework.  This stems from the fact that for Fock space representations the natural generalization of Harmony is an operator rather than a function, as seen below.
\begin{definition}
A Harmony operator $\cH$ for a Fock space representation of $N$ dimensions  is a Hermitian matrix in $\mathbb{C}^{N\times N}$ and the grammatical sentences then correspond to principal eigenvectors of $\cH$.
\end{definition}

As a particular example of such a Harmony operator, let us consider the previously discussed  $A^n.B^n$ grammar.  The Harmony operator for an arbitrary depth sentence can be expressed as
\begin{equation}
\mathcal{H} = n_{c_0,S} - \sum_{j=0}^\infty\sum_{r\in \{r_j,c_j,l_j\}}(4n_{S,r} + n_{A,r} + n_{B,r}  +n_{.,r}) + 2\sum_{j=0}^\infty \sum_{r\in \{r_j,c_j,l_j\}} n_{S,c_{j-1}}(n_{S,c_j}+n_{A,l_j}+n_{.,c_j}+ n_{B,r_j}).\label{eq:treeExample}
\end{equation}
Note that in this particular context, the Harmony operator can be thought of as a function rather than an operator because the Harmony operator is a sum of number operators which can each be represented as a diagonal matrix.  Thus the Harmony operator can be replaced (at a conceptual level) by a function that yields the Harmonies for each possible configuration of the system.

This need not be the case in general.  \check{Settings are possible in which non-classical effects appear.\footnote{\check{Under the most quantum-mechanical interpretation, where the phase of isolated states has no observable consequence, $-\ket{\psi}$ and $\ket{\psi}$ have the same interpretation.
Then
$\ket{\psi_{+}} := (\ket{\rm J} + \ket{\rm K}) \ket{\rm Subject} + \ket{\rm left}\ket{Verb}$
and
$\ket{\psi_{-}} := (\ket{\rm J} - \ket{\rm K}) \ket{\rm Subject} + \ket{\rm left}\ket{Verb}$
also have the same interpretation, an ambiguous blend of the interpretations `Jay left' and `Kay left'.
However
$\ket{\psi} := \frac{1}{2}(\ket{\psi_{+}} + \ket{\psi_{-}})$
is unambiguously `Jay left'.
That the superposition of two ambiguous states can be unambiguous is a purely quantum effect.
}}}
%It is possible to design grammars where ambiguities carry syntactic meaning or even have sentences that are encoded in quantum states.
For this reason we introduce the following dichotomy between Harmony operators:
\begin{definition}
A Harmony operator $\mathcal{H}$ is classical if, for all number operators $n_{f,r}$ used in the language, $\cH$ satisfies $n_{f,r}\cH=\cH n_{f,r}$.
\end{definition}

A natural example of a classical Harmony operator is given in~\eq{treeExample}.  It is clearly a classical Harmony operator because it only depends on number operators, which do not change the fillers bound to any role.  Thus the order in which you (i) count whether a role holds a particular filler and (ii) apply the Harmony operator does not matter, and hence the example is classical.

As an example of a non-classical Harmony operator consider the following.  Assume given a square lattice where each vertex in the graph holds a role (see Figure \ref{fig:toric}).  Let $P$ be the set of plaquettes in the graph (meaning the set consisting of all the unit cells in the graph which each consist of 4 vertices because the graph is square) and let $V$ be the vertex set for the graph (meaning the set of all sets of $4$ vertices about each vertex in the square graph).  The language in this case is generated by a single filler and the Harmony operator can be expressed as
\begin{align}
\mathcal{H}=&-\sum_{\{r_1,r_2,r_3,r_4\} \in P} (1 - 2n_{r_1})\otimes(1 - 2n_{r_2})\otimes (1 - 2n_{r_3})\otimes (1 - 2n_{r_4})\nonumber\\
&-\sum_{\{r_1,r_2,r_3,r_4\} \in V}(a_{r_1}^\dagger + a_{r_1})\otimes(a_{r_2}^\dagger + a_{r_2})\otimes (a_{r_3}^\dagger + a_{r_3})\otimes (a_{r_4}^\dagger + a_{r_4}).
\end{align}
This Harmony operator corresponds to the Toric code, which can be used as an error correcting code for quantum computing.  The states of maximum Harmony correspond to the minimum energy subspace of the code, which is proven to be protected from local error.  This shows that apart from mere academic curiosity, non-classical Harmony operators are vitally important for quantum computing and also that quantum error correction has deep links to linguistics when viewed through this lens.

The expression for the Harmony of a given Fock space representation takes the same form regardless of whether we have a classical or a quantum Harmony operator.
\begin{definition}
Let $\ket{\phi}\in \mathbb{C}^{N}$ be a Fock space representation of a sentence and let $\mathcal{H}\in \mathbb{C}^{N\times N}$ be a Harmony operator.  We then define the Harmony of $\ket{\phi}$ to be $H(\ket{\phi}) = \bra{\phi} \mathcal{H} \ket{\phi}$.
\end{definition}

Finding a grammatical sentence---one that is maximally Harmonic---then boils down to optimizing the expectation value of the Harmony operator. However, this optimization inherently incurs a cost.  We assess the cost of both quantum and classical Harmony optimization using an oracle query model.  Within this model we assume that nothing is known about the Hamiltonian, save what can be gleaned from querying the oracle that represents the Harmony function.

\section{Harmony maximization: numerical simulation}
\label{sec:HMaxNumerics}

From the perspective of gradient symbolic computation, the goal of parsing a sentence within a grammar is to find the assignment of roles and fillers that maximize the Harmony.  Here we will look at the problem of optimizing Harmony for classical Fock spaces, which is to say where the Harmony operator is just a sum of number operators.  We will see from these examples that optimizing Harmony within a Fock space representation is practical and as such providing quantum speedups to the learning process is significant.

\subsection{The $A^n \, . \, B^n$ grammar}

Recall that the parse tree rules for grammatical expressions of the form $A^n \, . \, B^n$ are defined over the four-symbol alphabet $\{A,B,S,.\}$.
This is a simple example, where the parse tree can be visualized as a ``herring bone'' structure (HB) that can be recursively described as follows:

0) zero depth HB consists of one node;
1) the root of an HB of depth $n$ has exactly three children: the children number 1 and 3 are leaves and child number 2 is an HB of depth $n-1$.

When the $n$ is chosen, the corresponding Fock space and the Harmony operator are fully defined, and Harmony being a diagonal operator, it can be reinterpreted as a certain scalar function $h$ on the space of all possible assignments of symbols to the nodes of the HB structure.

Negative harmony $-h$ can be thus treated as a Hamiltonian of the corresponding Potts model, which is a generalization of the Ising model \cite{Gallavotti1999}, \cite{potts1952} on the HB graph. The difference between the model at hand and the traditional Ising model is that the Ising model consists of 2-value spins, whereas in our instance of the Potts model each node can assume one of four values in $\{A,B,S,.\}$. The maximum Harmony assignment of these values is understood as a ground state of the Hamiltonian $-h$.

In this setting we can find such ground state by the use of the simulated annealing strategy that has an excellent track record in solving Ising models. An outline of an algorithm for solving a more general Potts model is as follows:

\begin{algorithm}[H]
\caption{Simulated annealing for Potts model.}
\label{alg:potts:annealing}
\algsetup{indent=2em}
\begin{algorithmic}[1]
\REQUIRE{Coupling graph $G$, initial symbol assignment $A_0$ to nodes of $G$, maximum iterations $maxUp$; hyperparameter: cooling schedule $t(i),i=1,\ldots,N$}
\ENSURE{initial symbol assignment $A_{opt}$}
\STATE {$h \gets Harmony(A_0)$; $A \gets A_0$; $A_{opt} \gets A$}
\FOR {$i \in \{1,\ldots,N\}$}
\STATE {$\beta \gets 1/t(i)$}
\FOR {$c \in \{1,\ldots,maxUp\}$}
\STATE {$u \gets$ random symbol update; $h' \gets Harmony(u(A))$}
\IF {($h' \geq h) || (rand() < exp(\beta (h'-h))$ }
\STATE {$A \gets u(A)$; $h \gets h'$}
\IF {$h=0$}
\RETURN{A}
\COMMENT{Early breakout on perfect Harmony}
\ENDIF
\ENDIF
\ENDFOR
\ENDFOR
\end{algorithmic}
\end{algorithm}

For any pre-selected $n$ there is a unique assignment of symbols that turns the HB structure of depth $n$ into a zero Harmony parse tree.
Our experiments indicate that this unique grammatical HB structure can be attained by the algorithm \ref{alg:potts:annealing} that starts from a random symbol assignment in $O(n)$ steps on average.
The numeric tests used highly optimized simulated annealing code modified to accomodate Potts models. We tested HB structures with $n \in [2..1024]$ measuring the minimal number of repetitions and sweeps of the annealing process required for achieving the maximum Harmony. This goal was consistently achievable with 10 repetitions and 20 sweeps independently of $n$. The minimal number of sweeps would occasionally fall to 19 in about $10\%$ of cases and it was registered at 18 in just one case. Since the structure of depth $n$ has $3\,n+1$ nodes, one can say that the maximization required roughly $600 n$ reevaluations of the Harmony function on average with at most $10\%$ variance. The empirical average complexity of harmonizing to the $A^n . B^n$ expression (as a function of $n$) is shown on Fig. \ref{Fig:annealing:AnBn}.

\begin{figure}[t]
\centering
\includegraphics[width = 0.9\textwidth]{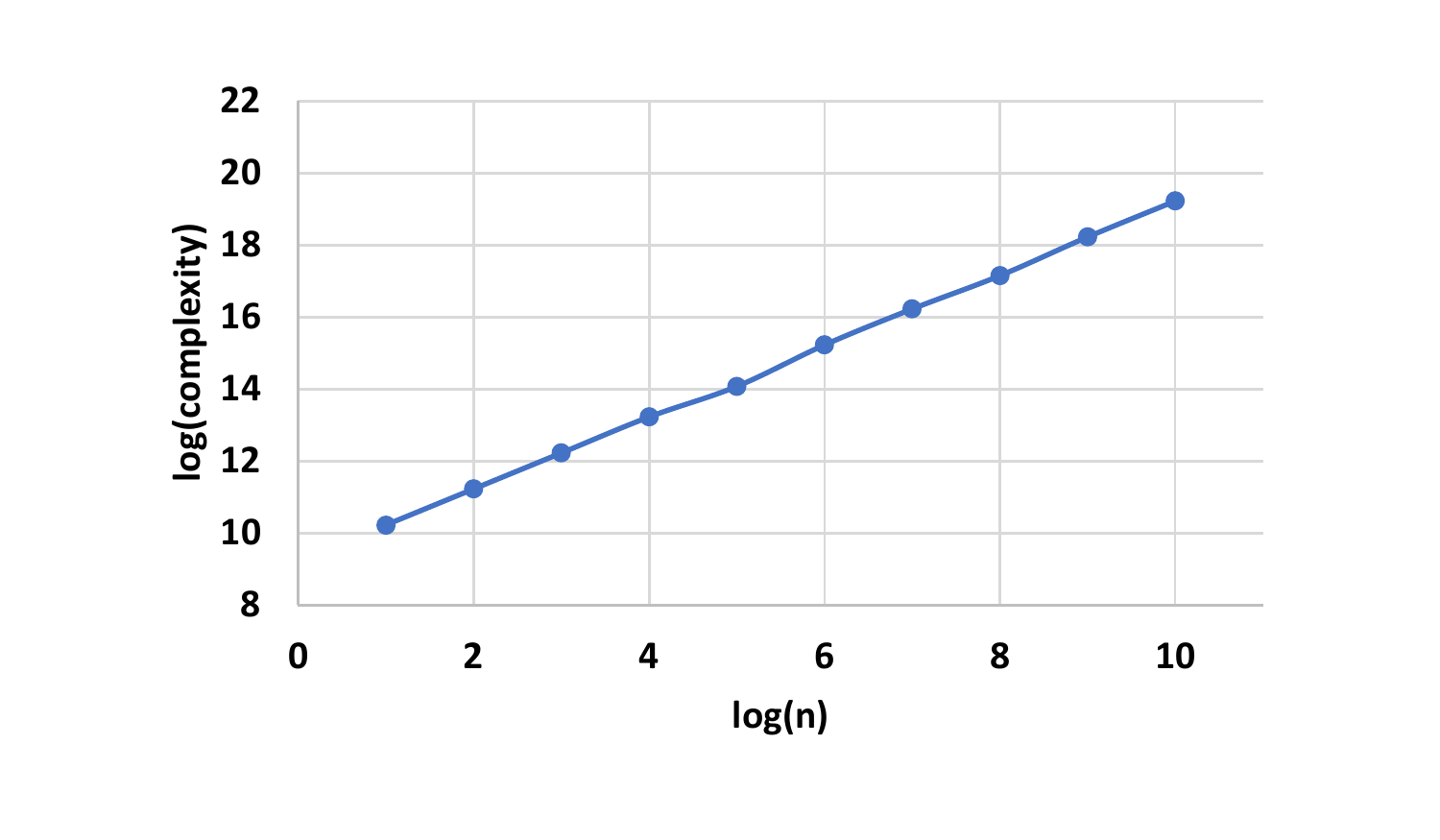}
\caption{Annealing complexity as measured by the number of times the Harmony function must be queried to find the tree of maximum Harmony for the $A^n . B^n$ grammar. (Dual log scale.)}
\label{Fig:annealing:AnBn}
\end{figure}

\subsection{The balanced parentheses grammar}

The balanced parentheses grammar to enumerate and error correct grammatical expressions composed of left and right parentheses is a grammar over the alphabet of 6 symbols A,B,C,S,(,) and the following set of normalized generative rules:

$S\rightarrow B, S \rightarrow C, B \rightarrow ( \, A, B \rightarrow ( \, ), A \rightarrow S \, ), C \rightarrow S \, S$.

Semantically the S symbol can only occur at root of a grammatical parse subtree tree or a complete grammatical parse tree.
For example, Fig \ref{Fig:parse_tree} shows the unique parse tree for the expression ( )( ) that is a concatenation of two disjoint grammatical subexpressions

\begin{figure}[h!]
\centering
\begin{tikzpicture}[%sibling distance=10em,
    every node/.style = {shape=rectangle, rounded corners,
    draw, align=center,
    top color=white, bottom color=blue!20},
    level 1/.style = {sibling distance=3cm},
    level 2/.style = {sibling distance=4cm},
    level 3/.style = {sibling distance=2cm},
    level distance = 1.25cm
    ]]
  \node {S}
    child { node{C}
        child { node {S}
            child { node {B}
                child { node {$($}}
                child { node {$)$}}
            }
        }
        child { node {S}
            child { node {B}
                child { node {$($}}
                child { node {$)$}}
            }
        }
         };
\end{tikzpicture}
\caption{An example of optimal Harmony parse tree.}
\label{Fig:parse_tree}
\end{figure}
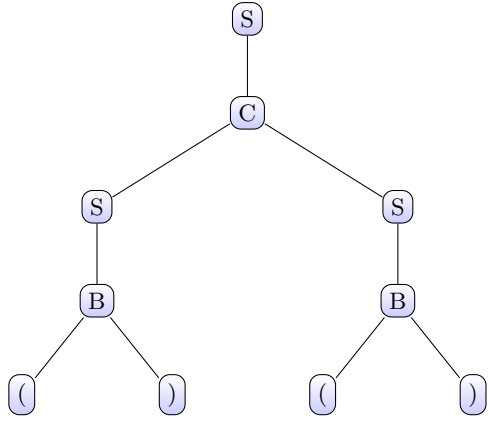

The Harmony function for this grammar is the following.
\begin{table}
\begin{centering}
{\LARGE Harmony Function}
\end{centering}\\
\begin{tabular}{|c|c|}
\hline
Symbol& Harmony\\
\hline
S&-2\\
A &-3\\
B&-3\\
C&-3\\
(&-1\\
)&-1\\
\hline
\end{tabular}
\hspace{5mm}
\begin{tabular}{|c|c|c|}
\hline
Parent&Left Child& Harmony\\
\hline
S&B&2\\
S&C&2\\
B&(&2\\
A&S&2\\
C&S&2\\
\hline
\end{tabular}
\hspace{5mm}
\begin{tabular}{|c|c|c|}
\hline
Parent&Right Child& Harmony\\
\hline
B&A&2\\
B&)&2\\
A&)&2\\
C&S&2\\
\hline
\end{tabular}\\
~\\
\begin{centering}
{\LARGE Harmony Operator}
\end{centering}
\begin{align*}
\mathcal{H} =& \sum_j\left(n_{S,j}(-2 + \delta_{j,1}) -3(n_{A,j} + n_{B,j}+ n_{C,j}) - n_{(,j} -n_{),j}) \right)\nonumber\\
& +2 \sum_j\left(n_{S,j}n_{B,L(j)} + n_{S,j}n_{C,L(j)} + n_{B,j}n_{(,L(j)}+ n_{A,j}n_{S,f(j)}+ n_{C,j}n_{S,f(j)} \right)\nonumber\\
& + 2\sum_j \left( n_{B,j} n_{A,R(j)} + n_{B,j} n_{),R(j)} + n_{A,j} n_{),R(j)} +n_{C,j} n_{S,R(j)} \right)
\end{align*}
\caption{Table describing the Harmony operator for the balanced parenthesis grammar.  A Harmony bonus of $+1$ is assigned for having $S$ at the root of the tree. Unless othewise stated, the Harmony for a given configuration is zero.  We also give the classical Harmony operator for balanced parenthesis grammar on a Fock space consisting of $2^{D}$ modes where we define for any vertex $j$ $L(j)$ to be the left child of the node and $R(j)$ to be the right child.\label{tab:parenHarm}}
\end{table}

Assuming the structure of a candidate parse  tree  is known, so is the structure of the corresponding Fock space, so is the Harmony operator.
As explained above, in case when the Harmony operator is diagonal, it can be cast as a real-valued function $h$ on the space of all possible $node\rightarrow symbol$ assignments for the given parse tree. An optimum-Harmony assignment of the symbols can be then found as an \emph{argmax} of the function $h$ using a suitable maximization method. We demonstrate below how this can be done with a certain simulated annealing approach.
Unfortunately, not every candidate parse tree  allows symbol assignment that realizes the absolute maximum of Harmony. In fact if we consider a set of binary trees of known maximum depth  $D$ and known maximum leaf count $L$ , then the subset of binary trees that allows grammatical assignment is exponentially small vs. the entire set (w.r.t. $D,L$).
We dub a binary tree that allows such maximum Harmony symbol assignment a \emph{feasible parse tree}. All other trees are dubbed \emph{infeasible}.
It follows that the relatively simple code for maximizing Harmony on a given candidate parse tree should be just a subroutine in a higher level algorithm that enumerates all feasible parse trees, or, for error correction purposes, morphs an infeasible tree into a feasible tree.
As shown below, the higher level of the overall algorithm can be also designed along the lines of simulated annealing over a reasonable update heuristics.

\subsubsection{Recursive enumeration of feasible parse trees.}

We start with a specialized Harmony optimization method that exploits the fact that the parentheses placement grammar is context-free. This method is likely to generalize well to any context-free grammar.
We observe that in this context any subtree of an optimal parse tree is optimal. Let us make a stronger observation for the particular Harmony Hamiltonian proposed in Table~\ref{tab:parenHarm}.

\begin{lemma}
 For the Harmony operator in Table~\ref{tab:parenHarm}: an entire harmonical parse tree has the Harmony of 0; any subtree of such tree has the Harmony of -1.
 \end{lemma}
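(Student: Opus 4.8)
The plan is to prove both assertions at once by structural induction over the parse tree, reading the ``Harmony of a subtree'' as the value obtained by restricting the operator of Table~\ref{tab:parenHarm} to the nodes and edges lying inside that subtree. The key bookkeeping observation is that the \emph{only} term separating a complete tree from a proper subtree is the root bonus $+1$, which the operator supplies through the Kronecker delta $\delta_{j,1}$: it fires precisely when a node sits at the global root (position $j=1$) carrying the filler $S$. A proper subtree never contains position $1$, so it is scored by exactly the same rules minus this bonus. Isolating this $+1$ is therefore what will cleave the lemma into its two cases.

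First I would define, for a node $v$ carrying filler $\sigma$, the quantity $H(v)$ as the sum of the single-symbol penalties over all nodes of the subtree rooted at $v$, plus the sum of all parent--child bonuses over its internal edges, with the root bonus omitted. This telescopes down the tree as
\begin{equation}
H(v) = \mathrm{penalty}(\sigma) + \sum_{c \text{ a child of } v}\Big(\mathrm{bonus}(\sigma,\sigma_c) + H(c)\Big),
\end{equation}
where $\sigma_c$ is the filler at child $c$. The base cases are the two terminal leaves $($ and $)$, each of which the symbol-penalty table assigns Harmony $-1$.

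Next I would run the inductive step over the six productions of the grammar, assuming every child subtree already satisfies $H=-1$. The mechanism is that in each rule the parent's symbol penalty is exactly absorbed by the bonuses it earns for its children together with their inductive values: $S\to B$ gives $-2+(2+H(B)) = H(B) = -1$ and likewise $S\to C$; the terminating production $B\to(\,)$ gives $-3+(2-1)+(2-1) = -1$ outright; the chaining rules $B\to(\,A$ and $A\to S\,)$ reduce to $H(A)$ and $H(S)$ respectively; and $C\to S\,S$ gives $-3+(2+H(S_1))+(2+H(S_2)) = 1+H(S_1)+H(S_2) = -1$. Checking all six confirms $H(v)=-1$ for every $v$ labeled $S,A,B,$ or $C$, hence for every subtree, which is the second assertion. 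The first then follows at once: the global root is labeled $S$ and sits at $j=1$, so it collects the extra $\delta_{j,1}$ contribution $+1$, giving total Harmony $H(\text{root})+1 = -1+1 = 0$.

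I expect the only genuine subtlety---rather than a real obstacle---to lie in fixing the convention that isolates the root bonus, since it is exactly this $+1$ that distinguishes the two cases; once ``Harmony of a subtree'' is agreed to exclude it, every production cancels by the same arithmetic and the induction is routine. As an independent sanity check I would also do a global count: the productions force $n_{(}=n_{)}=n_B$ and $n_B = n_A+n_C+1$, so summing all penalties ($-2n_S-3(n_A+n_B+n_C)-n_{(}-n_{)}$) against all bonuses ($2n_S+4(n_A+n_B+n_C)$) and the root bonus yields $H = n_A+n_B+n_C-n_{(}-n_{)}+1 = 0$ for the complete tree, matching the inductive value.
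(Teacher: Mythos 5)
Your arithmetic is correct, and your bottom-up induction over the six productions does cleanly establish that every tree built according to the grammar has total Harmony $0$ and that each of its subtrees scores $-1$; your closing global count is also right, and your isolation of the $\delta_{j,1}$ root bonus matches the paper's own ``disregarding this bonus'' convention. But this proves only the achievability half of the lemma, and that is not the half the paper proves or needs. In the paper ``harmonical'' means Harmony-\emph{maximizing} (the sentence introducing the lemma is ``any subtree of an optimal parse tree is optimal''), and the paper's proof quantifies over \emph{arbitrary} assignments of symbols to the nodes: it first shows, by induction on depth with a case analysis on the root symbol ($S$ versus $A,B,C$ versus parentheses), that no assignment whatsoever can give a subtree Harmony greater than $-1$, and then shows by a second case distinction that any subtree attaining $-1$ must have every child subtree of its root at exactly $-1$. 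Your induction never considers a configuration that violates a production rule, so it cannot deliver either of these facts: it does not show that $0$ is the \emph{maximum} Harmony over all assignments, nor that a Harmony $-1$ subtree necessarily decomposes into Harmony $-1$ child subtrees.

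That missing direction is precisely what the lemma is used for downstream. The feasibility predicate (``$T$ is feasible iff the maximum Harmony thus achieved is equal to $-1$'') and the completeness of the recursive enumeration in Algorithm~\ref{alg:enum:subtrees} --- which builds candidates only out of child subtrees already known to have Harmony $-1$ --- both rest on the claim that \emph{every} optimal tree, over all possible symbol assignments, has this recursive structure. Your argument fixes the assignment to be grammatical from the outset, and so assumes away exactly the competitors (ungrammatical assignments) that the paper's case analysis must eliminate. The repair is to keep your computation as the achievability step --- which, to be fair, the paper itself leaves implicit --- and then add the paper's two top-down steps: the universal upper bound of $-1$ per subtree under any assignment, and the propagation of the value $-1$ from a node to its children.
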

\begin{proof}
Before proceeding with a recursive proof, we recall that there is a Harmony bonus of +1 for symbol S at the root of the entire tree. Disregarding this bonus we can say that the entire tree and any of its subtrees must have the Harmony of -1 in a harmonical parse tree.
Let us first prove, recursively, that a parse subtree with any assignment of symbols cannot have Harmony greater than -1. Indeed it is obvious for subtrees of depth 0. Assuming it has been proven for subtrees of depth at most d consider a parse subtree of depth d+1. Any child subtree of its roots has the Harmony of at most -1 by the induction hypothesis. The Harmony of the symbol assignment at the root is negative. Unless the edges to the child subtrees correspond to the correct generative rules (and thus incur the Harmony bonus of +2), the overall Harmony is going to be less than -1. So let us exhaust cases where the edges do correspond to generative rules.

Case A,B,C: Root assignment of either  A,B,C carries Harmony penalty of -3. If there is only one child subtree with the Harmony ≤-1 the total subtree Harmony cannot exceed -1+2-3 = -2. If there are two child subtrees, the Harmony cannot exceed -1+(-1)+2+2-3 = -1.
Case S: If there are two child subtrees under the root, each with Harmony ≤-1, we note that at most one edge to one of those subtrees can gain the Harmony bonus of +2 (since S has no generative rules with two children). Thus the overall Harmony cannot exceed -1 +(-1) + 2 + 0 -2 = -2. If there is only one subtree under the root, the overall Harmony still cannot exceed -1 + 2 -2 = -1.
Case “(,)”:  the case when the root assignment is one of the parentheses is obvious.
Let us now prove, by case distinction, that in a parse subtree of Harmony -1, then any child subtree of its root must also have Harmony -1.
Case “(,)”:  If the root assignment is either of the parentheses, any child subtree will contribute at most -1 to the overall Harmony. Thus there must be no child subtrees for the overall tree to have Harmony of	 -1. The claim of the observation is trivially valid.
Case A,B,C: Root assignment of either  A,B,C carries Harmony penalty of -3. If there is only one child subtree with the Harmony $\leq -1$ the total subtree Harmony cannot be $-1$. Therefore there are two child subtrees with the harmonies
$h_1\leq -1,h_2 \leq -1$ and the overall Harmony is at most $h=h_1+h_2+4-3=h_1+h_2+1$. Obviously we must have $h_1=h_2=-1$ for $h=-1$.
Case S: If there are two child subtrees under the root, each with Harmony $\leq -1$, we note that at most one edge to one of those subtrees can gain the Harmony bonus of +2 (since S has no generative rules with two children). Thus the overall Harmony cannot be -1. Therefore there is one child subtree with the Harmony $h_c \leq -1$. And the overall Harmony is at most $h_c$. Thus we must have $h_c=-1$.
\end{proof}

\begin{algorithm}[H]
\caption{Recursive function $enumSubtrees(L,D)$.}
\label{alg:enum:subtrees}
\algsetup{indent=2em}
\begin{algorithmic}[1]
\REQUIRE{$L$ parse tree leaf count, $D$ maximum depth of a parse tree}
\ENSURE{Complete list of Harmony $-1$ parse trees of leaf count $L$ and maximum depth $D$}
\IF {$D=0$}
\IF {$L=1$}
\RETURN {$[ root[( ]; root[)]]$}
\ELSE
\RETURN {[]}
\ENDIF
\ENDIF
\COMMENT{First we enumerate all the trees of depth $D$ with only one child subtree under the root}
\STATE {$ret1 \gets []; list1 \gets enumSubtrees(L,D-1)$}
\FOR {$t \in list1$}
\STATE {$cand \gets root[S],child[t]$}
\IF{$Harmony(cand)=-1$}
\STATE{$ret1 \gets ret1+[cand]$}			
\ENDIF
\ENDFOR
\COMMENT {for two-child root we explore all possible splits of leaf counts between children}
\STATE{$ret2 \gets []$}
\FOR {$\ell \in \{1,\ldots,L-1\}$}
\STATE {$lleft\gets enumSubtrees(\ell,D-1)$}
\STATE {$lright \gets enumSubtrees(L-\ell,D-1)$}
\FOR {$t_1 \in lleft, t_2 \in lright$}
\FOR {$s \in {A,B,C}$}
\STATE {$cand \gets root[s],child[t_1 ],child[t_2]$}
\IF {$Harmony(cand)=-1$}
\STATE{$ret2 \gets ret2+[cand]$}
\ENDIF
\ENDFOR
\ENDFOR
\ENDFOR
\RETURN{$ret1+ret2$}
\end{algorithmic}
\end{algorithm}

This algorithm reads as a very expensive doubly recursive routine as it is written. However in practice it can be made perfectly manageable by caching all the previously computed enumerations in a global cache. This way any recursively requested enumSubtrees(l,d) retrieves the answer immediately from the cache iff it has been ever before computed.

\subsubsection{Annealing into feasible parse trees.}

Algorithm~\ref{alg:enum:subtrees}, developed in the previous subsection, is built upon specific properties of the grammar in question and might not generalize cleanly to other grammars. Let us consider a general situation where, given a binary tree which is a candidate parse tree, it is then relatively easy to find a maximum-Harmony assignment of symbols to the nodes of the tree. In particular,  it is relatively easy to conclude algorithmically, whether the candidate tree is feasible. As per discussion in the beginning of the section, feasible trees are quite rare and the probability that a randomly generated tree is feasible, is exponentially low. We would benefit from a strategy that, given a random tree, can morph the tree after an acceptable number of steps into a feasible tree. Such strategy would have an important error correction aspect, as it would be capable of editing an erroneous parse tree into a correct one at a relatively low cost.  A tree morphing strategy needs to be broken up into a sequence of relatively simple steps to be universal and it is intuitively clear that in general the morphing strategy cannot be greedy, i.e. it is in general not possible to reach a feasible tree by a sequence of steps that monotonously increases maximum Harmony of the consecutive candidate trees. Thus, we are again considering the simulated annealing philosophy at this level. Below we propose one possible design for a tree-morphing algorithm.

\emph{Elementary steps.}
We allow the following elementary operations on binary trees:

	1) Leaf deletion: a leaf of the tree is deleted along with the edge leading to it

	2) Leaf creation: a leaf is added to some node with fewer than 2 children.

Clearly this set of operations is universal. Indeed, any tree can be evolved from a root by a sequence of operations of type 2) and any tree can be reduced to a root by a sequence of operations of type 1). Therefore any tree $T_1$ can be morphed into any other tree $T_2$ by a sequence of operations of type 1) and 2). However we have found that it is beneficial in practice to introduce a redundant elementary operation:

	3) Leaf forking: turn some leaf into an interior node by attaching two new leaves to it.

\emph{Morphing under constraints.}
Just as in the previous subsection, we drive a request for a feasible tree by stipulating its desired leaf count $L$ and maximum depth $D$.
We choose the morphing updates such that the depth of the tree post-update never exceeds $D$ and its leaf count stays very close to $L$. Thus we always prefer a leaf forking or leaf creation at interior node, whenever the leaf count falls below $L$; and we never add a leaf to an existing leaf node or fork a leaf if this leads to a tree of depth greater than $D$.  (There is a theoretical possibility of a deadlock in this strategy, where a leaf cannot be added without increasing the depth of the tree beyond the limit, however, this cannot happen when $D>\log_2(⁡L)$, which is the primary scenario.)
The top-level scheme of an annealing-style tree morphing algorithm is as follows:

\begin{algorithm}[H]
\caption{Parse tree morphing (top level).}
\label{alg:tree:morphing}
\algsetup{indent=2em}
\begin{algorithmic}[1]
\REQUIRE{Initial tree $T_0$, maximum depth $D$, maximum iterations $maxUp$; hyperparameter: cooling schedule $t(i),i=1,\ldots,N$}
\ENSURE{Feasible parse tree of leaf count $L$ or  $L-1$}
\STATE {$h \gets Harmony(T_0)$; $T \gets T_0$}
\IF {$T$ is feasible}
\RETURN {$T$}
\ENDIF
\FOR {$i \in \{1,\ldots,N\}$}
\STATE {$\beta \gets 1/t(i)$}
\FOR {$c \in \{1,\ldots,maxUp\}$}
\IF { $leaf count(T) < L$ }
\STATE {$u \gets$ random additive update for $T$}
\ELSE
\STATE {$u \gets$ random reductive update for $T$}
\ENDIF
\IF {$depth(u(T)) \leq D$}
\STATE {$h' \gets$ maximum Harmony on $u(T)$}
\IF {$(h'\geq h)~||~ (rand()< exp⁡(\beta(h'-h))$ }
\STATE { $h \gets h', T \gets u(T)$ }
\IF {$T$ is feasible}
\RETURN {$T$}
\ENDIF
\ENDIF
\ENDIF
\ENDFOR
\ENDFOR
\end{algorithmic}
\end{algorithm}

Here the “random additive update” 	means forking of a randomly selected leaf or adding a leaf as a second child to an interior node. The “random reductive  update” as per our definition of elementary operations means deleting a leaf. In order to spur the convergence and  eliminate  deadlocks we exclude adding leaf(s) to a site where a leaf has been has been  recently deleted and we exclude deletion of a recently added leaf. The “T is feasible” predicate entails maximizing Harmony over all the assignments of symbols to the nodes of the subtree T. This can be easily done in practice by running a suitable simulated annealing subroutine on the set of all possible symbol assignment configurations. The subtree T is feasible iff the maximum Harmony thus achieved is equal to -1.

\subsubsection{Simulation metrics for annealing into feasible parse trees.}

	After requesting a random binary tree of depth at most 4 with 4 leaves, the initial random tree gets morphed into a feasible 4-terminal parse tree in less than 60 elementary moves of our Algorithm \ref{alg:tree:morphing}. For comparison, it typically takes more than 1000 randomly generated 4-leaf sample trees to get a feasible parse tree candidate.
	After requesting a random binary tree of depth at most 6 with 5 leaves in takes about 450 elementary moves in median case to generate a feasible parse tree candidate with the harmony of $-1$\footnote{Obviously, there are no harmonic trees with an odd number of leaves}.After requesting a random binary tree of depth at most 7 with 6 leaves in takes about 2100 elementary moves in median case to get a feasible parse tree. (A quick reference to these numbers is given in the Table \ref{tree:anneal:cost}.) In comparison, we haven't succeeded in finding a feasible tree among the first $100,000$ randomly generated 6-leaf sample trees.

\begin{table}[h!] \caption{Cost of simulated annealing into feasible parse trees}
\begin{tabular}{c c c}
\hline
Depth & Leaves & Number of moves (median) \\ [0.5ex]
\hline
4 & 4 & 55 \\
6 & 5 & 450 \\
7 & 6 & 2100 \\ [1ex]
\hline
\end{tabular}
\label{tree:anneal:cost}
\end{table}

    The algorithm, however, has significant downsides and requires additional work. First of all it is sensitive to the shape of initial tree candidate. In case of unfavorable initialization it could take up to twice as long to terminate than on the average case. We also have registered a single instance of 6-leaf run where it never converged. (This is why we list the median steps to termination rather than ``average''). Another feature of the termination metric is that it is likely to still be exponential in the requested number of leaves. An open research question is whether this is a classical limitation of the problem or can be improved upon with a better algorithm.

\section{Computational Power of Quantum Language Processing}
We have seen that Classical Fock Space representations are can be used to solve problems in language processing, but an important question remains: ``what quantum advantages can be gleaned from using a quantum Harmony operator?''  We provide evidence for two kinds of advantages.  The first such advantage shows that quantum language processing using a reasonable family of quantum binding operators, cannot be efficiently simulated on a classical computer within arbitrarily small error unless $\BQP=\BPP$.  We demonstrate this by recasting the problem of parsing a quantum language to the problem of performing a quantum computation.  The second such advantage is speedups for optimizing classical Harmony functions using quantum simulated annealing.

The issue of the computational harness of quantum learning task has increasingly come to the fore with a number of high-profile dequantizations, or quantum inspired classical algorithms, of quantum algorithms that would seem at first glance to offer exponential speedups~\cite{gilyen2018quantum,tang2018quantum,tang2018quantum2,wiebe2015quantum,wiebe2015quantum2}.  
This points a spear at the heart of the hope that quantum models for data may be more expressive than classical methods.  Here we address this by showing that there are at least some classes of languages such that evaluating the language is equivalent to quantum computation, thus suggesting that our approach is unlikely to ever be dequantized.  Furthermore, these results trivially show that there exists a class of Boltzmann machines that are universal and cannot be dequantized.

In order to demonstrate that quantum Fock-space representations for language are more powerful than classical representaitons, we need to first define a computational model that uses such representations to solve problems.  We do this below.
\begin{definition}
We define a Harmonic quantum computer to be a model for quantum computing that obeys the following assumptions.
\begin{enumerate}
\item  Let $\mathcal{F}$ be a twice-differentiable map from $[0,1]$ to a Harmony operator acting on $O(n)$ modes such that $\|\mathcal{F}(s)\|$ and $\|\partial_s \mathcal{F}(s)\|$ are in $O({\rm poly}(n))$ for all $s\in [0,1]$.
\item Let $\mathcal{H}(s)$ consist of a sum of terms that are formed from products of at most $\kappa\in O(1)$ binding operators and that the coefficient of each such term be efficiently computable.
\item Each binding operator in the Fock space representation can be represented as an $O({\rm poly}(n))$ sparse row-computable matrix.
\item The state of the quantum computer can be set at any time, at cost $O({\rm poly}(n))$, to $\prod_{j=1}^{O({\rm poly}(n))} a^\dagger_j \ket{0}$.
\item Assume that the user can measure the occupation number for each role/filler combination in the language at unit cost and also measure in the eigenbasis of $\mathcal{F}(s)$ for any $s\in [0,1]$ within error $\epsilon$ and probability of failure at most $1/3$ at cost $O({\rm poly}(n/\epsilon))$.
\end{enumerate}
\label{def:lingComp}
\end{definition}

%\begin{definition}
%Let $\mathcal{F}_{XXZ}$ yield a Harmony operator with binding operators $a^\dagger_{(t,x),w}$ for $x\in \{0,1\}$ and $t\in \{0,2n\}$.  Let $\{U_p\}$ be a set of $n^2$ two-qubit operations acting on.
%\end{definition}

With this definition in place it is easy to see that such a Harmonic quantum computer differs slightly from the type of problems that we have considered previously.
No notion of Harmony optimization is built into the computer.  Additionally, the computer requires a parameterized family of Harmony operators rather than just one.
The requirement that we use a family of Harmony operators is introduced to deal with the fact that Harmony maximization is absent in this model.  Specifically, we solve the problem of Harmony maximization by choosing a Harmony operator that is easy to solve classically and then slowly transform it to the actual Harmony operator that we want to solve.  This is analogous to adiabatic quantum computing~\cite{farhi2000quantum}.

\begin{theorem}
There exists a Harmonic quantum computer that satisfies Definition~\ref{def:lingComp} with $\kappa=4$ that is polynomially equivalent to the circuit model of quantum computing.
\end{theorem}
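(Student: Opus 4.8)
The statement asserts a polynomial equivalence, so the plan is to establish two simulations: that a quantum circuit can efficiently reproduce the behaviour of any Harmonic quantum computer (containment in $\BQP$), and conversely that a suitable Harmonic quantum computer can reproduce the output of any polynomial-size quantum circuit ($\BQP$-hardness). For the hard direction I would route through the known equivalence between adiabatic and circuit-model quantum computation, exploiting the remark preceding the statement that this model is engineered to mimic an adiabatic sweep.

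\emph{Circuit simulates the Harmonic computer.} By the nilpotence (unique-binding) property each mode carries occupation $0$ or $1$, so the $O(n)$ modes of condition~1 of Definition~\ref{def:lingComp} form an $O(n)$-qubit register. Conditions~2--3 say $\mathcal{F}(s)$ is a sum of polynomially many terms, each a product of at most $\kappa=4$ sparse, row-computable binding operators, hence $\mathcal{F}(s)$ is itself a sparse, row-computable Hamiltonian of norm $O(\mathrm{poly}(n))$. Sparse Hamiltonian simulation therefore implements $e^{-i\mathcal{F}(s)t}$ to inverse-polynomial precision with a poly-size circuit, and combining this with quantum phase estimation reproduces the eigenbasis measurement of condition~5 to accuracy $\epsilon$ at cost $\mathrm{poly}(n/\epsilon)$. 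The product-state initialization of condition~4 and the occupation-number measurements are just a computational-basis preparation and measurement. Every primitive of the model is thus efficiently simulable, giving containment in $\BQP$.

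\emph{Harmonic computer simulates the circuit.} The key observation is a locality count: writing Pauli operators through binding operators via $X_r = a^\dagger_r + a_r$ and $Z_r = \II - 2 a^\dagger_r a_r$, any $2$-local qubit Hamiltonian becomes a sum of Harmony terms each a product of at most four binding operators, i.e.\ exactly $\kappa = 4$. Since $2$-local Hamiltonians suffice for universal adiabatic quantum computation (Aharonov et al.; Kempe--Kitaev--Regev), I would take a Feynman--Kitaev history-state construction for the target circuit, reduce it to $2$-local form, and set $\mathcal{F}(s) = -\big((1-s)\,\mathcal{H}_{\rm init} + s\,\mathcal{H}_{\rm final}\big)$ so that the principal eigenvector (maximal Harmony) tracks the instantaneous ground state and terminates on the history state encoding the computation; this linear interpolation automatically meets the norm and derivative bounds of condition~1. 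Because the model supplies eigenbasis measurement rather than time evolution, I would realise the sweep by a quantum-Zeno schedule: prepare the product ground state of $\mathcal{H}_{\rm init}$ via condition~4, then measure in the eigenbasis of $\mathcal{F}(s_k)$ for $0 = s_0 < \cdots < s_K = 1$. For $\Delta s$ small relative to the spectral gap each measurement returns the instantaneous ground eigenvalue with high probability and projects onto the instantaneous ground state, and a final occupation-number measurement reads out the answer.

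\emph{Main obstacle.} The delicate part is not the locality bookkeeping but guaranteeing that the whole package fits Definition~\ref{def:lingComp} while remaining efficient. I must exhibit a circuit-to-Hamiltonian family whose relevant spectral gap $\gamma$ is bounded below by $1/\mathrm{poly}(n)$ \emph{along the entire path}, since the Zeno step count scales like $\|\partial_s\mathcal{F}\|^2/(\gamma^2\,\delta)$ for target failure probability $\delta$ and the measurement resolution $\epsilon$ must beat $\gamma$. The standard subtlety is that the perturbation gadgets used to descend to $2$-locality can shrink this gap, so I would need either a direct gap-preserving $2$-local construction or a careful gadget analysis that certifies a polynomial lower bound. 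Establishing this polynomial gap, together with the accumulated Zeno error estimate, is where the real work lies; the two simulation directions then combine to give the claimed polynomial equivalence.
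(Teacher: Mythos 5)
Your simulation direction (circuit simulates the Harmonic computer) is essentially the paper's: sparse-Hamiltonian simulation of the row-computable $\mathcal{F}(s)$ plus phase estimation reproduces the eigenbasis measurement, and the rest of the primitives are cheap. The only detail you gloss over is initialization: the binding operators are not unitary, so the paper prepares $a^\dagger\ket{0}$ by evolving under the Hermitian operator $a^\dagger + a$, using $e^{-i\pi(a^\dagger+a)/2}\ket{0} = -i\,a^\dagger\ket{0}$, again via sparse simulation; your "computational-basis preparation" shorthand is fine in spirit but this is how the model's condition 4 is actually met by a circuit. Your Zeno emulation of the adiabatic sweep is also exactly the paper's mechanism, including the step-count scaling $\max_s \|\dot{\mathcal{F}}(s)\|^2/(\gamma^2\delta)$ and the requirement $\epsilon \in O(\gamma)$ on measurement resolution.

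Where you genuinely diverge is the hardness direction, and this is also where your proof has its acknowledged hole. You propose Feynman--Kitaev history states reduced to $2$-local form by perturbation gadgets, then lifted to binding operators via $X_r = a^\dagger_r + a_r$, $Z_r = \II - 2a^\dagger_r a_r$ (your locality count giving $\kappa = 4$ is correct). But, as you yourself flag, this leaves unproven the one thing the whole argument hinges on: a $1/\mathrm{poly}(n)$ spectral gap along the entire interpolation path after gadgetization, since gadgets generically compress gaps and require large coupling strengths. The paper avoids this entirely by invoking the space-time circuit-to-Hamiltonian construction of~\cite{GOSSET}, whose plaquette and vertex terms are natively four-body in the binding operators (no gadgets needed) and which comes with an explicit proven gap bound $\gamma(s) \ge \frac{1}{4n+3}\bigl(1 - s\cos(\pi/2n)\bigr) \in \Omega(n^{-3})$ valid for all $s \in [0,1]$. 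In other words, the paper's choice of citation is precisely the resolution of the obstacle you name; your route could likely be completed by instead citing a gap-certified $2$-local universal adiabatic construction (e.g.\ Biamonte--Love), but as written the proof is incomplete at exactly that step, whereas everything else in your outline matches the paper's argument step for step.
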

\begin{proof}
In order to prove our claim we need to show first that there exists a harmonic quantum computer that can simulate any circuit then we will show that this model can be simulated efficiently by a circuit-based quantum computer.
The forward direction of the claim follows immediately from~\cite{GOSSET}, which shows equivalence between the circuit model of quantum computation and adiabatic quantum computing using an XXZ model on a lattice.
The construction used in the work is the following.

\begin{figure}[t!]
\includegraphics[width=0.9\linewidth]{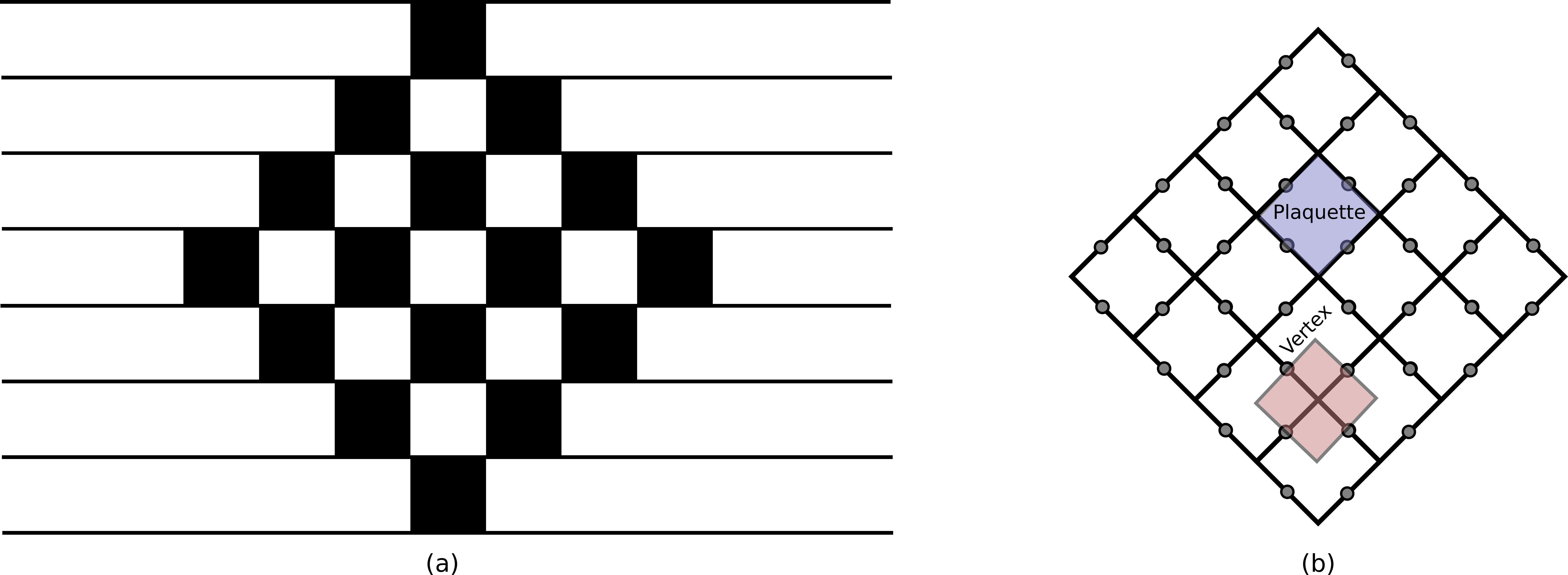}
\caption{Circuit to Hamiltonian construction used in~\cite{GOSSET}.  (a) Represents shows in black boxes a set of two-qubit gates acting on a system of $8$ qubits.  (b) Represents the structure of the Hamiltonian, wherein each circle drawn on an edge represents a site that a particle can be at with examples of the four-body vertex and plaquette operators used in the construction superimposed.}
\end{figure}

Consider a set of two-qubit gates $\{U_p\}$ that act on $2k$ qubits with linear-nearest neighbor connectivity.  While the two-qubit gates are arbitrary, there is an assumed pattern to their targets.  The gates are laid out in a causal diamond.  In the first timestep, gate operations only occur between qubits $k-1$ and $k$.  Similarly in the last timestep there is also only a two-qubit gate between qubits $k-1$ and $k$.  In the second timestep there are two qubit gates between qubits $k-1$ and $k-2$ as well as qubits $k$ and $k+1$.  The same qubits are also targetted by the two-qubit gates in the second-last timestep.  This process is repeated until both patterns intersect at the $k^{\rm th}$ time step.  These two-qubit gates are arbitrary and since they are universal, any quantum circuit can be embedded within this pattern for a sufficiently large $k$, potentially by taking many of the gates to be identity.

A Harmony operator can be constructed that corresponds the these gates.  The Fock space for this Harmony operator consists of roles corresponding to the space-time coordinates that a gate acts within.  For example, let $w$ be a qubit that a particular two-level gate acts on and assume that the gate is active between times $t$ and $t+1$.  The roles correspond to the boundaries of this space-time region: $(w,t), (w+1,t), (w,t+1), (w+1,t+1)$.  The fillers that are placed in each role are $0,1$ which correspond to the values that the qubits that each of the gates act on could take.
We denote the binding operators for the corresponding grammar to be $a^\dagger_{(w,t),f}$ for $f\in \{0,1\}$ and similarly define $n_{(w,t),f} :=a^{\dagger}_{(w,t),f} a_{(w,t),f}$.

If we define $P$ to be the plaquettes formed by the boundaries of the space-time volumes that each of the gates resides within (i.e. the $t$ and $w$ coordinates that bound the space and time that each gate acts within) and let $V$ be the vertex set for the graph then we can then define an indexed family of Harmony operators $\mathcal{F}(s)$ for $s\in [0,1]$ to be~\cite{GOSSET}
\begin{equation}
\mathcal{F}(s) = \sum_{p\in P} \mathcal{H}_{\rm gate}^p(s) +\sqrt{1-s^2} \mathcal{H}_{\rm init} + \sum_{v\in V} \mathcal{H}^{v}_{\rm string} +\mathcal{H}_{\rm input}
\end{equation}
Here we take for convenience $n_{(w,t)} := \left(n_{(w,t),0} + n_{(w,t),1}\right)$ and use $h.c.$ as an abbreviation for Hermitian conjugate
\begin{align}
\mathcal{H}_{\rm gate}^p(s) &:= - \left[n_{(w,t)}n_{(w+1,t)}+n_{(w,t+1)}n_{(w+1,t+1)}+s\mathcal{H}_{\rm prop}^p \right],\nonumber\\
\mathcal{H}_{\rm prop} &:= \sum_{\alpha,\beta,\gamma,\delta} \left(\bra{\beta\delta} U_p \ket{\alpha\gamma} a^\dagger_{(w,t+1),\beta}a_{(w,t),\alpha}a^\dagger_{(w+1,t),\delta}a_{(w+1,t+1),\gamma} \right) +h.c.\nonumber\\
\mathcal{H}^{v}_{\rm string} &:= -\left[n_{(w,t)}+n_{(w,t+1)} +n_{(w+1,t)}+n_{(w+1,t+1)} - 2(n_{(w,t)}+ n_{(w,t+1)})(n_{(w+1,t)}+n_{(w+1,t+1)}) \right],\label{eq:univHarmony}
\end{align}
where we define $\mathcal{H}^v_{\rm string}$ as above for all vertices in the graph that have degree $4$, which is to say that the vertices do not sit at the boundary of the causal diamond.  If the vertex sits at the boundary the terms that couple outside the diamond are set to zero.  We also implicitly assume that the $w's$ indexed in the above terms are the spatial coordinates for the qubits within each plaquette that a given gate $U_p$ acts upon.  Finally, we have that
\begin{align}
\mathcal{H}_{\rm init}&:= -\left[n_{(1,n+1)} + n_{(2n,n+1)} \right],\nonumber\\
\mathcal{H}_{\rm input}&:= -\left[ \sum_{w=1}^{2n} \sum_{t\le n} n_{(w,t),1} \right],
\end{align}
which serves to create a Harmony penalty if the fillers corresponding to the initial qubit state is not set to $0$ at the beginning of the computation (corresponding to $s=0$).
The above Harmony operator satisfies the requirements laid out in Definition~\ref{def:lingComp} with $\kappa=4$ by inspection.

It is further proved in~\cite{GOSSET} that the maximum eigenvalue of the Harmony operator $\mathcal{F}(s)$ is $0$ and the eigenvalue gap for any $s$ is at least
\begin{equation}
\gamma(s) \ge \frac{1}{4n+3}\left(1-s\cos\left(\frac{\pi}{2n} \right)\right)\in \Omega(n^{-3}).\label{eq:gapbd}
\end{equation}

The state of maximum Harmony is shown in~\cite{GOSSET} that if $k= \sqrt{n}/16$ then the configuration with maximum Harmony corresponds can be measured to find the output of the circuit with probability bounded below by a positive constant.  Thus $\mathcal{F}(1)$ is a Harmony operator with $\kappa=4$ whose maximum Harmony configuration yields, after a successful measurement, the result of any quantum computation.

In order to see that the cost is $O({\rm poly}(n))$ for the Harmonic quantum computer note that we have within the model ascribed the cost of measuring the Harmony of the system of the system within error $\epsilon$ and probability at least $1/3$ at cost $O({\rm poly}(n/\epsilon))$.  In order to guarantee that the measurement successfully projects onto the maximum Harmony state, the measurement must have $\epsilon \in O(\gamma)$.  Since $\gamma\in \Omega(n^{-3})$ it suffices to choose $\epsilon \in \Theta(n^{-3})$ and thus the cost of the projection within the model is $O({\rm poly}(n))$.  Thus if the measurement successfully projects onto the state of maximum Harmony then the computation can be implemented in polynomial time.

Next we need to show that the measurement at the end of the protocol can be prepared with high probability.  The method proposed in~\cite{GOSSET} is to use adiabatic state preparation, which is guaranteed to work in polynomial time because the minimum gap is inverse-polynomial.  In our context, we do not have the ability to perform an adiabatic sweep within our model so we instead use the Zeno-effect to emulate it.

First the state of maximum Harmony for $\mathcal{F}(0)$ is chosen by design to be a state of the form $\prod_{j=0}^{2n}a^\dagger_{(j,0),0} \ket{0}$~\cite{GOSSET}.  By assumption, this state can be prepared at no cost in the above model for a Harmonic quantum computer.  Now for any $s\in [0,1]$ we have that the state of maximum Harmony $\ket{\psi_{\max}(s)}$ can be chosen (by selecting an appropriate global phase as a function of $s$) to obey for an orthonormal set of instantaneous eigenvectors $\{\ket{\psi(s)}\}$ of $\mathcal{F}(s)$
\begin{align}
\left.\frac{\partial}{\partial \Delta} \ket{\psi_{\max}(s+\Delta)}\right|_{\Delta =0}  &= \sum_{\psi(s) \ne \psi_{\max}(s)} \frac{\bra{\psi(s)} \dot{\mathcal{F}}(s) \ket{\psi_{\max}(s)}}{\bra{\psi_{\max}(s)} \mathcal{F}(s) \ket{\psi_{\max}(s)} - \bra{\psi(s)} \mathcal{F}(s) \ket{\psi(s)}} \ket{\psi(s)},\nonumber\\
&:=\sum_{\psi(s) \ne \psi_{\max}(s)} \frac{\bra{\psi(s)} \dot{\mathcal{F}}(s) \ket{\psi_{\max}(s)}}{\gamma_{\psi,\psi_{\max}}}\ket{\psi(s)}.
\end{align}
which exists because the spectral gap is in $\Omega(n^{-3})$ from~\eqref{eq:gapbd}.  It then follows from Taylor's theorem that if we take $P^\perp(s)=(\openone - \ketbra{\psi_{\max}(s)}{\psi_{\max}(s)})$ then it is immediately clear from the fact that $P^{\perp}(s)\ket{\psi_{\max}(s)}=0$ that
\begin{equation}
\frac{\partial}{\partial \Delta} \bra{\psi_{\max}(s+\Delta)}P^{\perp}(s)\ket{\psi_{\max}(s+\Delta)}\Biggr|_{\Delta =0} = \bra{\dot{\psi}_{\max}(s)}P^{\perp}(s)\ket{\psi_{\max}(s)}+\bra{\psi_{\max}(s)}P^{\perp}(s)\ket{\dot{\psi}_{\max}(s)}=0.\label{eq:1deriv0}
\end{equation}
Similarly, it is easy to see that
\begin{equation}
\left.\frac{\partial^2}{\partial \Delta^2}\bra{\psi_{\max}(s+\Delta)}P^{\perp}(s)\ket{\psi_{\max}(s+\Delta)}\right|_{\Delta =0} = 2\bra{\dot{\psi}_{\max}(s)}P^{\perp}(s)\ket{\dot{\psi}_{\max}(s)}
\end{equation}

Thus we have that if the spectral gap is at least $\gamma(s)$ for all $s$ then
\begin{align}
\left|\left.\frac{\partial^2}{\partial \Delta^2 }\bra{\psi_{\max}(s+\Delta)}P^{\perp}(s)\ket{\psi_{\max}(s+\Delta)}\right|_{\Delta =0}\right|&=2 \left|\sum_{\psi(s),\psi'(s) \ne \psi_{\max}(s)} \frac{\bra{\psi(s)} \dot{\mathcal{F}}(s) \ket{\psi_{\max}(s)}}{\gamma_{\psi,\psi_{\max}}} \frac{\bra{\psi_{\max}(s)} \dot{\mathcal{F}}(s) \ket{\psi'(s)}}{\gamma_{\psi',\psi_{\max}}}\delta_{\psi,\psi'}\right|,\nonumber\\
& = 2 \left| \sum_{\psi(s) \ne \psi_{\max}(s)}\frac{\bra{\psi_{\max}(s)} \dot{\mathcal{F}}(s) P^{\perp}(s)\ketbra{\psi(s)}{\psi(s)} P^\perp(s) \dot{\mathcal{F}}(s) \ket{\psi_{\max}(s)}}{\gamma^2_{\psi,\psi_{\max}}}\right|\nonumber\\
& \le 2\frac{}{\gamma(s)^2}\left|\sum_{\psi(s)}\bra{\psi_{\max}(s)} \dot{\mathcal{F}}(s) P^{\perp}(s)\ketbra{\psi(s)}{\psi(s)} P^\perp(s) \dot{\mathcal{F}}(s) \ket{\psi_{\max}(s)}\right|\nonumber\\
& = \frac{2}{\gamma(s)^2} \left|\bra{\psi_{\max}(s)}\dot{\mathcal{F}}(s) P^{\perp}(s) \dot{\mathcal{F}}(s) \ket{\psi_{\max}(s)}\right|\nonumber\\
&\le \frac{2\|\dot{\mathcal{F}}(s)\|^2}{\gamma(s)^2}.\label{eq:2derivbd}
\end{align}
For any $\Delta>0$ we therefore have from~\eqref{eq:1deriv0},~\eqref{eq:2derivbd} and the fundamental theorem of calculus that
\begin{equation}
\bra{\psi_{\max}(s+\Delta)}P^{\perp}(s)\ket{\psi_{\max}(s+\Delta)} = \int_0^\Delta \int_{0}^{t} \frac{\partial^2}{\partial \delta^2}\bra{\psi_{\max}(s+\delta)}P^{\perp}(s)\ket{\psi_{\max}(s+\delta)}\mathrm{d}\delta \mathrm{d}t \le \max_s \frac{\Delta^2\|\dot{\mathcal{F}}(s)\|^2}{\gamma(s)^2}
\end{equation}

Now let us assume that we attempt to prepare the state $\ket{\psi_{\rm max}(1)}$ by uniformly sweeping over $s$ and taking $\Delta=1/r$ for $r$ steps.  This gives from the union bound
\begin{equation}
P_{\rm fail} \le r\max_s \left(\frac{\|\dot{F}\|}{r\gamma(s)} \right)^2 =\frac{1}{r}\max_s \left(\frac{\|\dot{F}\|}{\gamma(s)} \right)^2.
\end{equation}
Equation~\eqref{eq:univHarmony} gives that $\|\dot{\mathcal{F}}(s)\|$ is in $O({\rm poly}(n))$ and $\gamma(s)$ also is known to be in $\Omega({\rm poly}(n))$.  Therefore we have that
$r\in \Theta({\rm poly}(n))$ measurements in the eigenbasis of $\mathcal{F}(s)$ suffices to prepare the state.  Each such measurement must now, however, have probability of failure at most $1/r$ which necessitates a logarithmic number of repetitions given that by assumption the probability of success for measurement in the Harmonic quantum computer model is at least $2/3$.  Thus the total cost of preparing the state is polynomial and in turn all quantum circuits can be simulated within the model of computing.

Next we need to show the converse, specifically that circuit based quantum computers can simulate a Harmonic quantum computer within bounded error using a polynomial number of gate operations.  First, if the number of terms present in the Harmony operator is polynomial and each binding operator is itself representable as a row-computable $O({\rm poly}(n))$-sparse matrix then it follows that for all $s$, $\mathcal{F}(s)$ can be represented as a row-computable $O({\rm poly}(n))$-sparse matrix as well.

If $\mathcal{F}(s)$ is a row-computable $O({\rm poly})(n)$-sparse matrix then it follows~\cite{berry2007efficient} that, for any $s$, $e^{-i\mathcal{F}(s)}$ can be simulated within error $\epsilon$ using $O({\rm poly}(n/\epsilon))$ gates.  Thus by using phase estimation, we can simulate a measurement in the eigenbasis within error $\epsilon$ and probability of success greater than $2/3$ using $O(1/\epsilon)$ applications of this simulation.  Thus the measurement can be implemented within cost $O({\rm poly}(n/\epsilon))$ as required.

Next we have to be able to apply the binding operators to prepare the initial state.  This is potentially challenging as the binding operators need not be unitary.  This can be solved by noting that~\cite{jordan2014quantum}
\begin{equation}
e^{-i \pi (a^\dagger + a)/2} \ket{0} =-ia^{\dagger} \ket{0}.
\end{equation}
Thus we can prepare the state if we can apply the creation operator within error $\epsilon$ at cost $O({\rm poly}(n/\epsilon))$.  Since $a^\dagger$ can be represented as an $O({\rm poly}(n))$-sparse row computable matrix this is possible~\cite{berry2007efficient}.  Further, since the number of times this process must be repeated is $O({\rm poly}(n))$ it follows that we can also by decreasing the error tolerance in each individual state prep prepare the initial state within constant error using $O({\rm poly}(n))$ gate operations.
From this it follows that each step in a protocol involving a Harmonic quantum computer that satisfies our assumptions can be efficiently simulated on a quantum computer.  This proves the converse direction for the proof.
\end{proof}

A natural consequence of this theorem is that there also exist models of quantum Boltzmann machines on lattices that are polynomially equivalent to quantum computing.  Previous work has shown that in principle such Boltzmann machines are $\BQP$--hard to train and evaluate~\cite{kieferova2017tomography} but did not show that they are $\BQP$ complete~\cite{kieferova2017tomography}.  This work provides such a proof.

\subsection{Quantum Advantages for Classical Harmony Optimization}
If a Harmony operator is classical then in general it is reasonable to believe that a machine that implements Harmony optimization has no more power than a non-deterministic Turing machine.  While this is certainly true, polynomial improvements to the annealing methods discussed above can be achieved using quantum computers or potentially classes of quantum annealers.  We focus our attention on the case of quantum computers as the case for potential advantage is more clear in that setting.  Here we discuss the previous work of~\cite{somma2008quantum} which shows that polynomial advantages can be attained for classical annealing processes and argue that these speedups indeed will be polynomial for optimization problems such as those that we examined above.

The cost of simulated annealing depends strongly on the spectral gap of the associated Markov process.  If we consider a sequence of inverse temperatures, $\beta_k$ then let $\delta$ be the minimum spectral gap of the transition matrices corresponding to the temperatures $\beta_k$.  Also let $\gamma$ be the minimum gap between the maximum Harmony state and the state with next highest Harmony and let $\epsilon$ be a tolerable failure probability and let the process take place in a space with $D$ configurations.  Provided that the final value of $\beta$ obeys $\beta_f \in O(\gamma^{-1} \log(D/\epsilon^2))$ then the complexity of simulated annealing scales as~\cite{stroock2013introduction,somma2008quantum}
\begin{equation}
N_{SA} \in O\left(\frac{\|\mathcal{H}\|\log(D/\epsilon^2)}{\gamma\delta} \right).
\end{equation}
In practice, since $\|\mathcal{H}\|$ typically scales polynomially with the number of roles and the gap is on the order of $1$ the dominant contribution to the cost is from the gap of the Markov chain, $\delta$.

Quantum algorithms can be used to improve upon this.  The most natural way to do so is to replace the random process of choosing new configurations with a quantum walk on a bipartite graph with each subgraph corresponding to a different configuration for the system.  The purpose of the quantum walk is to accelerate the mixing time.  However, along the way the annealing process wherein $\beta$ is decreased requires projecting the quantum walk into an eigenstate of the walk operator at each step.  This process can be implemented using phase estimation and adds to the cost of the protocol.  The final complexity of the algorithm when accomodating for these issues is~\cite{somma2008quantum}
\begin{equation}
N_{QSA} \in O\left(\frac{\|\mathcal{H}\|^2\log^3(D/\epsilon^2)}{\gamma^2 \sqrt{\delta}} \right),
\end{equation}
which has quadratically better scaling with the gap of the Markov chain and at the price of worse scaling with the remaining parameters.  Given the fact that for our applications these parameters are expected to be exponentially smaller than the gap of the Markov chain, the improvements that quantum offers can be significant.

\section{Learning the Harmony Operator}
\label{sec:Learning}
%\section{Harmony Optimization}
The problem of Harmony optimization is in general a challenging problem.  It involves finding configurations that achieve maximal Harmony over the set of all possible occupations within the Fock space.  This problem is unlikely to be solvable either classically or quantumly even for classical Harmony operators.  This challenge can be clearly seen because binary satisfiability problems such as $3$-SAT can be mapped to a Harmony optimization problem.  Thus if we could efficiently maximize Harmony in general on either type of computer it would imply that either
$\BPP = \NP$ or $\BQP=\NP$, both of which are false under broadly accepted complexity-theoretic conjectures.  These complexity-theoretic results imply that we cannot expect greedy local optimizers to yield states of globally maximum Harmony.

The shortcomings of local optimizers can be sidestepped by using global optimizers such as simulated annealing \check{\cite{kirkpatrick1983optimization}}, quantum annealing \cite{kadowaki1998quantum}, or iterated local search {\cite{glover2006handbook}}.  Simulated annealing is a physics-inspired algorithm that aims to mimic annealing processes in metallurgy.  The idea behind the algorithm in our context is, given a particular Fock state $\ket{\phi}$, randomly alter the occupations using one of a set of predefined moves.  This move yields a new state $\ket{\phi'}$ which is accepted if the Harmony is improved but only rejected with probability proportional to $e^{\beta( H(\ket{\phi'}) - H(\ket{\phi}))}$ for some constant $\beta>0$ if the Harmony is not improved by the move.  This gives annealing the ability to escape from local optima while at the same time retaining many of the features of local optimization. This method is pursued in Section~\ref{sec:HMaxNumerics}.

%A challenge that can face standard implementations of simulated annealing is that if the optimizer finds itself in a local optima and it faces a basin in the Harmony landscape that it needs to escape from then the probability of simulated annealing achieving this shrinks exponentially with the height of the barrier.  Methods such as quantum annealing, which can be executed in present quantum devices, allow quantum mechanical tunneling to let the state transition this gap using quantum tunneling instead of thermal annealing.  This can speed up the time to solution relative to simulated annealing with local moves substantially given a quantum device to execute the optimization.  Although we do not specifically examine this technique here, it is nonetheless relevant to our work because Harmony operators such as that in~\eq{treeExample} can be optimized quadratically faster in a quantum setting than in a classical setting.

\subsection{Quantum unsupervised learning of Harmony operators}

While classical Harmony functions may have a natural construction for the problem at hand, it is often difficult to find a unique quantum Harmony operator that is ideally suited for a given language processing task.  This naturally raises the possibility of inferring, from data from a language, a quantum Harmony operator that can then be used to determine whether a given sentence is grammatical.
%A further problem that arises in such settings is that, while the classical Harmony operator may be obvious, it is difficult to identify a quantum analogue of the classical Harmony operator
In the next subsection we will propose addressing this problem by learning a quantum Harmony operator by supervised training. That is to say, the user is provided with an oracle that yields copies of quantum state vectors appended with a label that specifies whether the vector is grammatical or not.  We first pre-train a quantum Boltzmann machine \cite{wiebe2016quantum,amin2018quantum,kieferova2017tomography} to generate a surrogate for the data set,
%and then propose a new form of discriminative training for a quantum Boltzmann machine that
using a learning algorithm that will assign weights to maximize Harmony for grammatical examples and minimize Harmony for ungrammatical examples.

This pre-training step can be done simply by applying the work of~\cite{kieferova2017tomography}.  The idea is to train a Boltzmann machine to generate a data set that is close to the distribution over the training data in terms of a natural statistical distance (or divergence).  The most natural figure of merit to use is the quantum relative entropy using either Golden-Thompson \cite{amin2018quantum,kieferova2017tomography} or relative entropy \cite{kieferova2017tomography} training.  Below we state the result for relative entropy training, but exactly the same result also holds for Golden-Thompson training which is better suited for cases where latent variables are used.

\begin{theorem}
Let $\ket{v_k}: k=1,\ldots, K$ be vectors in \check{$\mathbb{C}^{2^{n}}$} for positive integers $K$ and $n$ with $\rho = \frac{1}{K} \check{\sum_{k=1}^K} \kb{v_k}{v_k}$ and let ${\cH}(\omega)=\sum_{i=1}^D \omega_i \cH_i:\mathbb{R}^D \rightarrow \mathbb{C}^{2^n\times 2^n}$ be a map such that for any $\omega \in \mathbb{R}^D$, $\cH(\omega)$ is a quantum Harmony operator.  If we assume that we have an oracle $F$ that, when given $\omega\in \mathbb{R}^D$, yields $\sigma(\omega)=e^{H(\omega)}/{\rm Tr}(e^{H(\omega)})$, then the number of queries to $F$ and training examples needed to estimate the gradient of $S(\rho|\sigma(\omega))$ with respect to $\omega$ within error $\epsilon$ in the Euclidean-norm with probability greater than $2/3$ is in $\mathcal{O}(D^2/\epsilon^2)$.\label{thm:relent}
\end{theorem}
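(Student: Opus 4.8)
The plan is to reduce the problem to estimating a gradient that admits an exact closed form, and then to bound the number of samples needed to estimate each of its $D$ components. First I would compute $\nabla_\omega S(\rho|\sigma(\omega))$ explicitly. Writing $Z(\omega) := \Tr(e^{\cH(\omega)})$ so that $\sigma(\omega)=e^{\cH(\omega)}/Z(\omega)$, and using $\log\sigma(\omega) = \cH(\omega) - \log Z(\omega)\,\II$, the relative entropy splits as
$$S(\rho|\sigma(\omega)) = \Tr(\rho\log\rho) - \Tr(\rho\,\cH(\omega)) + \log Z(\omega),$$
whose first term is independent of $\omega$. Because $\cH(\omega) = \sum_i \omega_i \cH_i$ is linear in $\omega$, the term $\Tr(\rho\,\cH(\omega))$ differentiates to $\Tr(\rho\,\cH_i)$, and the partition-function term yields, via $\partial_{\omega_i}\Tr(e^{\cH(\omega)}) = \Tr(\cH_i\, e^{\cH(\omega)})$ (cyclicity of the trace applied to the Duhamel expansion of the derivative of a matrix exponential),
$$\partial_{\omega_i} S(\rho|\sigma(\omega)) = \Tr\!\big((\sigma(\omega)-\rho)\,\cH_i\big).$$
The point I would stress is that this identity is \emph{exact}: unlike the log-likelihood objectives that force the Golden--Thompson approximation, the linearity of $\log\sigma(\omega)$ in $\omega$ removes every commutator that would otherwise obstruct the derivative, which is precisely the ``no approximations'' feature advertised for relative-entropy training.

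Next I would build unbiased estimators for the two pieces of $g_i := \Tr(\sigma(\omega)\cH_i) - \Tr(\rho\,\cH_i)$. Each $\cH_i$ is Hermitian---required so that $\cH(\omega)$ is a Harmony operator for every real $\omega$---and of bounded norm, hence a directly measurable observable. For the model term I would draw copies of $\sigma(\omega)$ from the oracle $F$ and average single-shot measurements of $\cH_i$; for the data term I would average $\bra{v_k}\cH_i\ket{v_k}$ over uniformly sampled training examples. Both estimators are unbiased, and the per-shot variance of each is at most $\Tr(\sigma(\omega)\cH_i^2) \le \|\cH_i\|^2 \le b^2$, where $b := \max_i \|\cH_i\|$ is treated as $O(1)$ (consistent with the bounded-norm, local structure assumed of the terms).

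Finally I would aggregate the coordinate-wise errors in a way that avoids spurious logarithmic factors. Spending $m$ samples on each of the $D$ components, the estimate $\hat g$ of the gradient satisfies, by independence of the shots,
$$\mathbb{E}\big[\|\hat g - g\|_2^2\big] = \sum_{i=1}^D \mathrm{Var}(\hat g_i) \le \frac{2Db^2}{m}.$$
Choosing $m = \lceil 6Db^2/\epsilon^2\rceil \in O(D/\epsilon^2)$ makes the right-hand side at most $\epsilon^2/3$, whereupon Markov's inequality gives $\Pr[\,\|\hat g - g\|_2 > \epsilon\,] \le 1/3$. The total number of oracle queries and training examples is then $Dm \in O(D^2/\epsilon^2)$, as claimed.

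The step I expect to be the main obstacle is not the gradient derivation, which is clean once linearity is exploited, but the error aggregation: a naive per-component guarantee combined with a union bound over the $D$ coordinates would incur an extra $\log D$ factor. Controlling instead the expected \emph{total} squared error and invoking Markov's inequality is what yields the stated $O(D^2/\epsilon^2)$ scaling without logarithms, and I would take care to verify that the per-shot variance really is $O(1)$ uniformly in $i$ so that the single bound $b$ suffices across all components.
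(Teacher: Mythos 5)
Your proposal is correct, but it takes a different route from the paper in the sense that the paper does not actually prove this statement internally at all: its entire proof is the single line ``The proof is a direct consequence of Theorem 1 of [Kiefer\'ov\'a--Wiebe],'' outsourcing both the gradient identity and the sampling analysis to that reference. What you have done is effectively inline that outsourced argument: the exact gradient formula $\partial_{\omega_i} S(\rho|\sigma(\omega)) = \Tr\big((\sigma(\omega)-\rho)\cH_i\big)$, obtained from the linearity of $\cH(\omega)$ in $\omega$ (so that $\log\sigma(\omega)=\cH(\omega)-\log Z(\omega)\,\openone$ and no Golden--Thompson approximation is needed), is precisely the content of the cited theorem, and your estimator-plus-Markov aggregation reproduces its $O(D^2/\epsilon^2)$ query count, with the nice touch that bounding the expected total squared error avoids the $\log D$ overhead a per-coordinate union bound would incur. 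One caveat you should surface explicitly: your variance bound requires $b:=\max_i\|\cH_i\|\in O(1)$, an assumption that appears nowhere in the theorem statement as written in this paper (it is inherited silently from the hypotheses of the cited result, and later made explicit in the paper's Corollary~\ref{cor:comp}, which assumes each $\cH_i$ is unitary and Hermitian). Without some such norm bound the claimed complexity is false --- rescaling $\cH_i\mapsto c\,\cH_i$ inflates the variance arbitrarily --- so your proof is honest about a gap in the statement itself that the paper's citation-style proof glosses over; stating that hypothesis and then your argument as given constitutes a complete, self-contained proof.
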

\begin{proof}
The proof is a direct consequence of Theorem 1 of~\cite{kieferova2017tomography}.
\end{proof}

This shows that you can learn a Harmony operator by training a quantum Boltzmann machine in an unsupervised manner \check{on a sample of sentences from the target language}.  Given a Harmony operator $\cH$ it is then easy to generate a possible parsing of the state.  You simply prepare an input state vector $\ket{v_{\rm test}}$ and apply quantum phase estimation to it using operator $\cH$.  The aim is then to find, subject to a Harmony threshold $\kappa$, an eigenvector $\ket{\sigma}$ of $\cH$ with eigenvalue $\sigma$ such that $|\bracket{\sigma}{v_{\rm test}}|$ is maximized subject to $\sigma \ge \kappa$.  The success probability of this procedure depends on the value of $\kappa$ and the overlap of the input state with the subspace of maximally Harmonic states.

\subsection{Quantum supervised learning of Harmony operators}

Quantum Harmony operators can also be learned in a supervised setting.  This form of quantum Boltzmann training has not been considered in the literature and can be applied to general quantum Boltzmann training processes.  For this reason, we will also include the possibility of hidden units in this form of training.  The weights on hidden units are in general harder to train in relative entropy training owing to the gradients no longer having a closed form.

Before starting, let us begin by introducing some notation.  First, we assume that the Hilbert space that the Harmony operator $\cH$ acts on is of the form $\mathbb{C}^{2^n}\otimes \mathbb{C}^{2^h}\otimes \mathbb{C}^2$ corresponding to subsystems for the input $\ket{v_k}$, the hidden units used to compute Harmony, and a label qubit.  In discriminative training of the Harmony operator, we need to constrain the first register to be $\ket{v_k}$.  We achieve this by adding a penalty to the Harmony operator conditioned on the input $\ket{v_k}$.  We call this Harmony operator $H_k$ and we denote the strength of these constraints $\lambda$.

\begin{equation}
H_k := \lambda \ket{v_k}\!\bra{v_k}\otimes \openone + \cH.
\end{equation}
The constraint is rigidly enforced by taking the limit as $\lambda\rightarrow \infty$.
Further, we can define a conditional Harmony operator $H_k'\in \mathbb{C}^{2^{h+1}\times 2^{h+1}}$ such that
\begin{equation}
[H'_k]_{x,y}:=\bra{v_k}\bra{\sigma_{k,x}} \cH \ket{v_k} \ket{\sigma_{k,y}}
\end{equation}
for a set of basis vectors $\ket{\sigma_{k,j}}$ \check{spanning the hidden-state space}.  We will choose these vectors ultimately to diagonalize $H_k'$.  We also define for any operator $f(k)$
\begin{equation}
\langle (f(k)) \rangle_k = \frac{{\rm Tr}~\left[f(k) \ket{v_k}\!\bra{v_k} \otimes e^{H_k'}\right]}{{\rm Tr}~e^{H'_k}},\qquad \mathbb{E}_k(\cdot) = \frac{1}{K} \sum_{k=1}^K f(k).\label{eq:expect}
\end{equation}
Finally, for notational simplicity we introduce a projector onto the label space that serves to test whether the label assigned to a vector by the Harmony operator is correct, which means that the label assigned to $\ket{v_k}$ is the value $\ell_k$ which is stored in the final qubit.
\begin{equation}
P_{\ell_k}  = \openone \otimes \openone \otimes \ket{\ell_k}\!\bra{\ell_k}.
\end{equation}
For example, if the sentence encoded in $\ket{v_k}$ is grammatical (maximally Harmonic) then $\ell_k=1$ and otherwise it is $0$.
A natural training objective for such discriminative training is the classification accuracy.  (Cross entropy could be considered, but the matrix logarithm needed in this makes it difficult to find an analytic form for the gradient of the training objective function.)  We define this function as
$$
\lim_{\lambda\rightarrow \infty} \frac{1}{K} \sum_{k=1}^K {\rm Tr}\left( \frac{P_{\ell_k} e^{H_k}}{ {\rm Tr}~ e^{H_k}}\right)
$$
The gradients of this objective function are given below.

\begin{theorem}
Let $\ket{v_k}: k=1,\ldots, K$ and $\ket{\ell_k}: k=1,\ldots, K$ be vectors in $\mathbb{C}^{2^{n}}$ and $\check{\mathbb{C}^2}$ for positive integers $K$ and $n$, let ${\cH(\omega)}=\sum_{i=1}^D \omega_i \cH_i:\mathbb{R}^D \rightarrow \mathbb{C}^{2^{n+h+1}\times 2^{n+h+1}}$ be a map such that for any $\omega \in \mathbb{R}^D$, $\cH(\omega)$ is a quantum Harmony operator. We then have that if $[P_{\ell_k}, \cH_i]=0$ for all $k$ and $i$ then
$$
\lim_{\lambda\rightarrow \infty}\partial_{\omega_i} \frac{1}{K} \sum_{k=1}^K {\rm Tr}\left( \frac{P_{\ell_k} e^{H_k}}{ {\rm Tr}~ e^{H_k}}\right) =  \mathbb{E}_k \left[ \langle P_{\ell_k}  \cH_i\rangle_k  - \langle P_{\ell_k} \rangle_k \langle \cH_i \rangle_k \right].
$$
\label{thm:deriv}
\end{theorem}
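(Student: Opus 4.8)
The plan is to differentiate the finite-$\lambda$ objective term by term with Duhamel's formula, exploit the commutation hypothesis $[P_{\ell_k},\cH_i]=0$ to collapse the resulting integral into a thermal covariance, and only then pass to the $\lambda\to\infty$ limit, in which the full Gibbs average collapses onto the conditional average $\langle\cdot\rangle_k$. First I would fix $k$ and write $Z_k := \Tr\, e^{H_k}$, so the $k$-th summand is $\Tr(P_{\ell_k} e^{H_k})/Z_k$. Since $\partial_{\omega_i} H_k = \cH_i$, Duhamel's formula gives $\partial_{\omega_i} e^{H_k} = \int_0^1 e^{s H_k}\, \cH_i\, e^{(1-s) H_k}\, \mathrm{d}s$. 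Applying the quotient rule together with cyclicity of the trace (which turns $\Tr(\partial_{\omega_i} e^{H_k})$ into $\Tr(\cH_i e^{H_k})$) yields, at finite $\lambda$,
\[
\partial_{\omega_i}\frac{\Tr(P_{\ell_k} e^{H_k})}{Z_k} = \frac{1}{Z_k}\int_0^1 \Tr\!\left(P_{\ell_k} e^{s H_k}\cH_i e^{(1-s)H_k}\right)\mathrm{d}s - \frac{\Tr(P_{\ell_k} e^{H_k})\,\Tr(\cH_i e^{H_k})}{Z_k^2}.
\]

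The key simplification comes from the hypothesis $[P_{\ell_k},\cH_i]=0$. Because $P_{\ell_k}$ acts only on the label qubit while $\kb{v_k}{v_k}\otimes\openone$ acts only on the input and hidden registers, $P_{\ell_k}$ also commutes with $\kb{v_k}{v_k}\otimes\openone$, hence with $H_k$ and with every $e^{sH_k}$. Commuting $P_{\ell_k}$ through the exponentials in the integrand and recombining $e^{sH_k}e^{(1-s)H_k}=e^{H_k}$ makes the integrand independent of $s$, namely $\Tr(P_{\ell_k}\cH_i e^{H_k})$. The integral therefore collapses, and writing $\langle A\rangle := \Tr(A e^{H_k})/Z_k$ for the full Gibbs average I obtain the thermal covariance $\partial_{\omega_i}\langle P_{\ell_k}\rangle = \langle P_{\ell_k}\cH_i\rangle - \langle P_{\ell_k}\rangle\langle \cH_i\rangle$.

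Finally I would take $\lambda\to\infty$. Let $\Pi_k := \kb{v_k}{v_k}\otimes\openone$ be the projector enforcing the input constraint and write $H_k=\lambda\Pi_k+\cH$ in block form relative to $\Pi_k$ and $\openone-\Pi_k$. For large $\lambda$ the spectrum splits into a band near $\lambda$ supported on $\mathrm{range}(\Pi_k)$ with effective generator $\Pi_k\cH\Pi_k + O(1/\lambda)$, and a bounded band on $\ker(\Pi_k)$; after normalizing by $Z_k$ the Gibbs state concentrates on $\mathrm{range}(\Pi_k)$, giving
\[
\lim_{\lambda\to\infty}\frac{e^{H_k}}{Z_k} = \kb{v_k}{v_k}\otimes\frac{e^{H_k'}}{\Tr\, e^{H_k'}},
\]
since $\Pi_k\cH\Pi_k$ is identified with $H_k'$ on the hidden-and-label space by the very definition $[H_k']_{x,y}=\bra{v_k}\bra{\sigma_{k,x}}\cH\ket{v_k}\ket{\sigma_{k,y}}$. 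Consequently $\lim_{\lambda\to\infty}\langle O\rangle = \langle O\rangle_k$ for $O\in\{P_{\ell_k},\,\cH_i,\,P_{\ell_k}\cH_i\}$, and taking the limit of the covariance expression, then averaging over $k$ with $\mathbb{E}_k$ (a finite sum, through which the limit passes freely), produces exactly the claimed gradient $\mathbb{E}_k[\langle P_{\ell_k}\cH_i\rangle_k - \langle P_{\ell_k}\rangle_k\langle\cH_i\rangle_k]$.

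The hard part is this last step: making rigorous that $e^{\lambda\Pi_k+\cH}/Z_k$ converges to $\kb{v_k}{v_k}\otimes e^{H_k'}/\Tr\,e^{H_k'}$. This requires the block/perturbation estimate that controls the off-diagonal coupling $\Pi_k\cH(\openone-\Pi_k)$ and the subleading $O(1/\lambda)$ correction to the effective generator, together with the observation that the $\ker(\Pi_k)$ band is exponentially suppressed in the normalized state. By contrast, the earlier Duhamel and covariance manipulations are routine once the commutation hypothesis is invoked, so the whole argument hinges on justifying the localization induced by the penalty $\lambda\Pi_k$.
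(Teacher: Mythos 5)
Your proposal is correct and follows essentially the same route as the paper's proof: Duhamel's formula plus cyclicity of the trace and the commutation hypothesis $[P_{\ell_k},\cH_i]=0$ (hence $[P_{\ell_k},H_k]=0$) to obtain the finite-$\lambda$ thermal covariance, followed by the $\lambda\to\infty$ localization $e^{H_k}/\Tr\, e^{H_k}\to \kb{v_k}{v_k}\otimes e^{H_k'}/\Tr\, e^{H_k'}$. The only difference is presentational: the paper justifies this last limit via first-order degenerate perturbation theory on the eigenpairs of $H_k$, whereas you phrase the same argument as a block-spectral splitting with respect to the constraint projector, with the unconstrained band exponentially suppressed.
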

Note that Harmonic Grammar operators such as (\ref{eq:treeExample}) satisfy the conditions of this theorem also Boltzmann machines satisfy this theorem under the transformation $\mathcal{H} \mapsto -H$ where $H$ is the Hamiltonian operator used in the quantum Boltzmann machine.

\begin{proof}
From the product rule we have that
\begin{align}
%\lim_{\lambda\rightarrow \infty}
\partial_{\omega_i} \frac{1}{K} \sum_{k=1}^K {\rm Tr}\left( \frac{P_{\ell_k} e^{H_k}}{ {\rm Tr}~ e^{H_k}}\right) &=
%\lim_{\lambda\rightarrow \infty}
\frac{1}{K} \sum_{k=1}^K \left[ {\rm Tr}\left(\frac{P_{\ell_k} \partial_{\omega_i} e^{H_k}}{ {\rm Tr}~ e^{H_k}} \right) - {\rm Tr}\left(\frac{P_{\ell_k}e^{H_k}{\rm Tr}~\left(\partial_{\omega_i} e^{H_k} \right)}{\left({\rm Tr}~ e^{H_k}\right)^2} \right) \right].
\end{align}
First note that from Duhamel's formula, the cyclic property of the trace and the assumption that $[P_{\ell_k}, \cH_i]=0$ for all $k,i$
\begin{align}
{\rm Tr} (P_{\ell_k} \partial_{\omega_i} e^{H_k}) &= {\rm Tr} \left(P_{\ell_k}\int_{0}^1 e^{H_k s} (\partial_{\omega_i} H_k)e^{H_k (1-s)} \mathrm{d}s \right)\nonumber\\
&= {\rm Tr} \left(P_{\ell_k}  (\partial_{\omega_i} H_k)e^{H_k } \right)={\rm Tr} \left(P_{\ell_k}  (\partial_{\omega_i} \cH)e^{H_k } \right)
\end{align}
Using exactly the same argument
\begin{align}
{\rm Tr} ( \partial_{\omega_i} e^{H_k}) &= {\rm Tr} \left(  (\partial_{\omega_i} \cH)e^{H_k } \right)
\end{align}
Thus for any $\lambda\in \mathbb{R}_+$
\begin{equation}
\partial_{\omega_i} \frac{1}{K} \sum_{k=1}^K {\rm Tr}\left( \frac{P_{\ell_k} e^{H_k}}{ {\rm Tr}~ e^{H_k}}\right) = \mathbb{E}_k \left[ \frac{{\rm Tr} \left(P_{\ell_k}  (\partial_{\omega_i} \cH)e^{H_k }\right)}{ {\rm Tr}~ e^{H_k}}  - {\rm Tr}\left(\frac{P_{\ell_k}e^{H_k}{\rm Tr}~\left( (\partial_{\omega_i} \cH)e^{H_k } \right)}{\left({\rm Tr}~ e^{H_k}\right)^2} \right) \right].\label{eq:mainThm}
\end{equation}

While the above expression holds for any valid constraint penalty $\lambda$, we want to understand the performance of the Boltzmann machine in the limit where the strength of the penalty goes to infinity.  Fortunately, we can argue about the form of the eigenvalues and eigenvectors of each $H_k$ in this limit.  This can be achieved using degenerate perturbation theory.

We now recount the argument from degenerate perturbation theory for completeness.
Consider the state $\ket{v_k}\otimes \ket{\sigma_{k,j}}$ where $\ket{\sigma_{k,j}}$ is chosen to be an eigenstate of $H'_k = \sum_{x,y} \ket{\sigma_{k,x}}\!\bra{\sigma_{k,y}}\bra{v_k}\bra{\sigma_{k,x}} \cH \ket{v_k} \ket{\sigma_{k,y}}$.  We then have that
\begin{align}
(\lambda \ket{v_k}\!\bra{v_k} \otimes \openone) \ket{v_k}\ket{\sigma_{k,j}}=\lambda\ket{v_k}\ket{\sigma_{k,j}},
\end{align}
which implies that this is an eigenstate of the constraint operator if one neglects $\cH$ in $H_k$.  Formally, let us consider the true eigenvalue of the operator and assert a Taylor series in powers of $\lambda^{-1}$ for the eigenvalue and eigenvector of $H_k$.  Specifically, for a fixed eigenstate $\ket{\psi_{k,j}}$ with eigenvalue $E_{k,j}$ we clearly have by taking the limit as $\lambda^{-1}\rightarrow 0$ that under the assumption that the eigenvalues and eigenvectors are chosen to be differentiable functions of $\lambda$, $\ket{\psi_{k,j}} = \ket{v_k}\ket{\sigma_{k,j}}+O(1/\lambda)$ and $E_{k,j} = \lambda + E_{k,j}^1 +O(1/\lambda)$ so
\begin{align}
\lambda \ket{v_k}\!\bra{v_k}\otimes \openone \ket{v_k}\ket{\sigma_{k,j}} + \cH \ket{v_k}\ket{\sigma_{k,j}}&= \lambda \ket{v_k} \ket{\sigma_{k,j}}+E^1_{k,j} \ket{v_k} \ket{\sigma_{k,j}} + O(1/\lambda).
\end{align}
From taking the $O(1)$ component of this equation we see that $E_{k,j}^1 = \bra{v_k} \bra{\sigma_{k,j}} \cH \ket{v_k} \ket{\sigma_{k,j}}:=\sigma_{k,j}$.
Note that the choice of $\ket{\sigma_{k,j}}$ to be eigenvectors of $H'_k$ allows us to guarantee that the eigenvectors of $H_k$ can be expressed as a differentiable function of $\lambda^{-1}$.  Thus to leading order in $\lambda^{-1}$ we can see that

\begin{equation}
(\ket{v_k}\!\bra{v_k} \otimes \openone)H_k = \sum_j \ket{v_k}\ket{\sigma_{k,j}}\!\bra{v_k}\bra{\sigma_{k,j}} (\lambda+ \sigma_{k,j})
\end{equation}

From this we can reason about $e^{H_k}/{\rm Tr}~ e^{H_k}$ in this limit.
\begin{equation}
\lim_{\lambda\rightarrow \infty} \frac{e^{H_k}}{{\rm Tr}~e^{H_k}} = \lim_{\lambda \rightarrow \infty} \frac{\sum_j \ket{v_k}\ket{\sigma_{k,j}}\!\bra{v_k}\bra{\sigma_{k,j}} e^{\lambda +\sigma_{k,j}}+O(1/\lambda)}{\sum_j e^{\lambda +\sigma_{k,j}}+O(1/\lambda)}=\frac{\ket{v_k}\!\bra{v_k} \otimes e^{H'_k}}{{\rm Tr} (e^{H'_k})}.
\end{equation}
The result then follows from~\eqref{eq:mainThm} and~\eqref{eq:expect} after noting $\partial_{\omega_i} \cH = \cH_i$:
\begin{align}
&\mathbb{E}_k \left[ \frac{{\rm Tr} \left(P_{\ell_k}  (\partial_{\omega_i} \cH)e^{H_k }\right)}{ {\rm Tr}~ e^{H_k}}  - {\rm Tr}\left(\frac{P_{\ell_k}e^{H_k}{\rm Tr}~\left( (\partial_{\omega_i} \cH)e^{H_k } \right)}{\left({\rm Tr}~ e^{H_k}\right)^2} \right) \right]\nonumber\\
&\qquad= \mathbb{E}_k \left[ \frac{{\rm Tr} \left(P_{\ell_k}  \cH_i \ketbra{v_k}{v_k}\otimes e^{H_k' }\right)}{ {\rm Tr}~ e^{H_k'}}  - {\rm Tr}\left(\frac{P_{\ell_k} \ketbra{v_k}{v_k}\otimes e^{H_k'}{\rm Tr}~\left(  \cH_i \ketbra{v_k}{v_k} \otimes e^{H_k' } \right)}{\left({\rm Tr}~ e^{H_k'}\right)^2} \right) \right]\nonumber\\
&\qquad= \mathbb{E}_k \left[ \langle P_{\ell_k}  \cH_i\rangle_k  - \langle P_{\ell_k} \rangle_k \langle \cH_i \rangle_k \right]
\end{align}
\end{proof}
The above result shows that an elementary expression for the gradient exists that can be expressed in terms of constrained expectation values of the terms in the Harmony operator.

\subsection{Quantum complexity of learning Harmony operators}

We now turn to the computational complexity of a supervised learning procedure exploiting Theorem \ref{thm:deriv}.  \check{We note that Harmonic Grammar operators like (\ref{eq:treeExample}) obey the conditions of this theorem, although under a different decomposition of $\cH$ into a sum of operators $\cH_{k}$ than that explicitly given in (\ref{eq:treeExample}).}  In particular, the operators $n_{\cdot}$ given in that equation are not unitary.  This is because the number operators count the occupation of a particular mode in the Fock representation and as a result obey $n_{\cdot} \ket{0} = 0$, which is manifestly non-unitary because $0$ is not a unit vector but $\ket{0}$ is.  We can, however, address this by expressing $n_{(\cdot)}$ in the Pauli-basis.  In particular $n_{(\cdot)} = (\openone - Z_{(\cdot)})/2$ where $Z_{(\cdot)}$ is the Pauli-Z operation acting on the same mode.  Since $Z=Z^\dagger$ and $Z^2 =\openone$ after making this substitution it is clear that we can re-write the Harmony operator to conform to the the assumptions of~Theorem~\ref{thm:deriv}.

\begin{corollary}\label{cor:comp}
Under the assumptions of Theorem~\ref{thm:deriv}, with the further assumption that $\cH_i$ is unitary and Hermitian for each $\cH_i$ in the Harmony operator, and given access to a unitary oracle $F(k):\ket{0}\mapsto \ket{v_k}\ket{\ell_k}$ and a state preparation oracle $G_k$ that prepares copies of $e^{H_k'}/{\rm Tr}~e^{H_k'}$, the number of queries to these oracles that are needed to compute a vector with components $ \mathbb{E}_k \left[ \langle P_{\ell_k}  \cH_i\rangle_k  - \langle P_{\ell_k} \rangle_k \langle \cH_i \rangle_k \right]$ within error $\epsilon$  in the Euclidean distance with probability at least $2/3$ is in $O(D^2/\epsilon^2)$.  If the process $G_k$ is defined such that $G_k \ket{k} \ket{0} = \ket{k}\ket{\psi_k}$ where $\ket{\psi_k}$ is a purification of $e^{H_k'}/{\rm Tr}~e^{H_k'}$ then the query complexity can be reduced to $O(D^2/\epsilon)$
\end{corollary}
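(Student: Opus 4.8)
The plan is to observe that Theorem~\ref{thm:deriv} has already reduced each of the $D$ gradient components to a difference of expectation values of bounded observables, taken in the constrained state $\kb{v_k}{v_k}\otimes e^{H_k'}/\Tr(e^{H_k'})$ and averaged over the training index $k$. The remaining task is therefore purely one of estimating such expectation values to additive precision from the two available oracles, in direct analogy with Theorem~\ref{thm:relent} and the estimators of~\cite{kieferova2017tomography}. First I would build an unbiased, bounded single-shot estimator for a fixed component $i$, and then account for the precision needed to control all $D$ components simultaneously in Euclidean norm.

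\emph{Single-shot estimators.} Sampling $k$ uniformly from $\{1,\dots,K\}$ realizes the outer average $\mathbb{E}_k=\frac1K\sum_k$. For each sampled $k$ I would query $F(k)$ to load $\ket{v_k}$ into the input register (and to read off the label $\ell_k$, which fixes the projector $P_{\ell_k}$), and query $G_k$ to prepare $e^{H_k'}/\Tr(e^{H_k'})$ on the hidden and label registers; together these produce the joint state on which $\langle\cdot\rangle_k$ in~\eqref{eq:expect} is defined. Because $\cH_i$ is assumed unitary and Hermitian, $\cH_i^2=\II$, so its spectrum is $\{\pm1\}$ and it is directly measurable with a $\pm1$ outcome of mean $\langle \cH_i\rangle_k$; likewise $P_{\ell_k}$ yields a $\{0,1\}$ outcome; and since $[P_{\ell_k},\cH_i]=0$ the product $P_{\ell_k}\cH_i$ is Hermitian with spectrum in $\{-1,0,1\}$ and can be measured in a shared eigenbasis to estimate $\langle P_{\ell_k}\cH_i\rangle_k$. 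The remaining term $\langle P_{\ell_k}\rangle_k\langle \cH_i\rangle_k$ is nonlinear in the state, so here I would prepare \emph{two} independent copies for the same $k$, measure $P_{\ell_k}$ on one and $\cH_i$ on the other, and multiply the outcomes; independence makes this product unbiased for $\langle P_{\ell_k}\rangle_k\langle \cH_i\rangle_k$.

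\emph{Query counting.} Every elementary outcome is bounded by $1$ in magnitude, so each single-shot estimator of a component has variance $O(1)$, and averaging $M$ shots gives additive error $O(1/\sqrt M)$, with a median-of-means amplification boosting the per-component success probability at only logarithmic cost. To guarantee Euclidean error $\epsilon$ it suffices to estimate each of the $D$ components to error $O(\epsilon/\sqrt{D})$; a union bound over components then yields overall success probability at least $2/3$. Each component therefore costs $M=O(D/\epsilon^2)$ shots, for a total of $O(D^2/\epsilon^2)$ oracle queries, which is exactly the sampling bound claimed and mirrors the accounting behind Theorem~\ref{thm:relent}.

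\emph{The quadratic speedup and the main obstacle.} If $G_k$ is instead supplied as a coherent unitary $G_k\ket{k}\ket{0}=\ket{k}\ket{\psi_k}$ with $\ket{\psi_k}$ a purification of $e^{H_k'}/\Tr(e^{H_k'})$, then each required expectation value can be written as the acceptance probability of a circuit assembled from $F(k)$, $G_k$, and the controlled observable, and amplitude estimation returns that probability to additive error $\delta$ using $O(1/\delta)$ queries rather than the $O(1/\delta^2)$ demanded by classical averaging. Propagating this quadratic saving through the same per-component precision yields the improved $O(D^2/\epsilon)$ bound. The step I expect to require the most care is the nonlinear term $\langle P_{\ell_k}\rangle_k\langle \cH_i\rangle_k$: classically it forces the two-independent-copy trick above, and coherently it is not a single acceptance probability, so realizing it within amplitude estimation (e.g.\ by estimating the two factors separately and bounding the error of their product, or by a combined reflection acting on two registers) while keeping the error budget per component at $O(\epsilon/\sqrt D)$ is the delicate part of the argument.
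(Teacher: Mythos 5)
Your proposal is correct and follows essentially the same route as the paper: bounded, unbiased single-shot estimators built from queries to $F$ and $G_k$ (the paper implements them via Hadamard tests with the marking unitary $U=\openone-2P_{\ell_k}$ and with controlled-$\cH_i$, which is equivalent to your direct eigenbasis measurements since $\cH_i^2=\openone$ and $[P_{\ell_k},\cH_i]=0$), per-component precision of order $\epsilon/\sqrt{D}$ with a union bound to get the $O(D^2/\epsilon^2)$ sampling bound, and amplitude estimation on the coherent superposition $\frac{1}{\sqrt{K}}\sum_k \ket{k}\ket{\psi_k}$ to obtain $O(D^2/\epsilon)$ in the purified case. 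If anything you are more careful than the paper on the one genuinely delicate point, the nonlinear term $\mathbb{E}_k\left[\langle P_{\ell_k}\rangle_k\langle \cH_i\rangle_k\right]$: the paper's three Hadamard tests only yield $\langle P_{\ell_k}\rangle_k$, $\langle \cH_i\rangle_k$ and $\langle P_{\ell_k}\cH_i\rangle_k$ and it never states how the product of expectations is estimated, whereas your two-independent-copy estimator (and your explicit flagging of its coherent analogue for the amplitude-estimation case) supplies exactly the missing step.
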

\begin{proof}
The algorithm for achieving this is constructive.  First, it is straightforward to see that with two queries to $F$ it is possible to construct a gate $U$ that marks the state $\ket{\ell_k}$---that is to say that $U\ket{\psi}=-\ket{\psi}$ if and only if $\psi = \ell_k$, and otherwise $U$ acts as the identity.  By applying the Hadamard test with this unitary and using a state generated by the oracle $G_k$ as input it is easy to see that you can sample from a random variable with expectation value $1/2+{\rm Tr}(\ketbra{v_k}{v_k} \otimes e^{H_k'} U)/2({\rm Tr} e^{H_k'})=1-{\rm Tr}(\ketbra{v_k}{v_k}\otimes e^{H_k'} P_{\ell_k})/2{\rm Tr} e^{H_k'}$.  The variance of this random variable is at most $1$ because the norm of any projector is at most $1$.  Similarly, because $\cH_i$ is unitary, we can use the Hadamard test in the same way to sample from a random variable with mean 
\begin{equation}
\frac{1}{2}\left(1+  {\rm Tr}\left( \frac{\ketbra{v_k}{v_k} \otimes e^{H_k'} \cH_i}{{\rm Tr}~e^{H_k'}} \right)\right)  = \frac{1}{2}  + \frac{\langle \cH_i \rangle_k}{2}
\end{equation} 
and variance at most $1$.  Finally, by applying $U\cH_i=\cH_i -2P_{\ell_k}\cH_i$ to the state yielded by $G_k$, we can sample from a random variable with mean $1/2 + \langle \cH_i\rangle_k/2 - \langle P_{\ell_k} \cH_i \rangle_k$ with variance at most $1$.  From the additive property of variance the number of samples needed to estimate each component of the gradient within error $\delta$ with probability at least $2/3$ is at most $O(1/\delta^2)$ repetitions of the circuit.  In order to guarantee that the error in the Euclidean norm is at most $\epsilon$ it suffices to take $\delta =\epsilon/D$.  The result then follows for the case where $G_k$ yields copies of a Gibbs state.

If a purified Gibbs state oracle is provided then by preparing a uniform superposition over the $K$ elements a state of the form $\frac{1}{\sqrt{K}} \sum_k \ket{k} \ket{\psi_k}$ can be prepared.  If we assume that $\ket{\psi_k} \in A\otimes B$ where $A$ is the Hilbert space corresponding to the domain of ${H}'_k$ and $B$ is an auxillary Hilbert space and if we define $K$ to be the Hilbert space used for the control register for $G_k$ then we see that
\begin{equation}
{\rm Tr}_{K,B} \left(\frac{1}{\sqrt{K}} \sum_k \ket{k} \ket{\psi_k}\right) = \mathbb{E}_k \left( \frac{e^{H'_k}}{{\rm Tr} e^{H'_k}} \right).
\end{equation}
Thus by repeating the above steps involving the Hadamard tests, we can create a unitary circuit such that the measurement of an individual qubit yield a random variables with means
\begin{equation}
\mathbb{E}_k \left( \frac{\langle \cH_i \rangle_k}{2}\right), \qquad \mathbb{E}_k\left(\frac{\langle \cH_i\rangle_k}{2} - \langle P_{\ell_k} \cH_i \rangle_k\right),
\end{equation}
and variances at most $1$.  Finally, by using amplitude estimation these means can be extracted within error $\epsilon/D$ using $O(D/\epsilon)$ applications of the above protocol~\cite{brassard2002quantum}.  This yields one component of the gradient and as there are $D$ components the total query complexity is $O(D^2/\epsilon)$ as claimed.
\end{proof}
The following lemma is well known and a proof can be found in~\cite{boyd2004convex}.
\begin{lemma}\label{lem:gradDescent}
Suppose that $f$ is a strongly convex function that obeys $(f(x')- f(x)-\nabla f(x)\cdot(x'-x))\|x'-x\|^{-1} \in [\mu/2,L/2]$ and achieves its global minimum at $x=x^*$. If the rate of descent $r$ is chosen such that $r=1/L$ then at iteration $t$ of gradient descent the distance from the optimum parameters obeys
\begin{align}
\|x(t)-x^*\| \leq \left(1-\frac{\mu}{L}\right)^t\|x(0)-x^*\|.
\end{align}
This implies that for every $\epsilon>0$ there exists $t\in O\left((L/\mu)\log\left(\|x(0)-x^*\|/\epsilon\right)\right)$ such that $\|x(t)-x^*\|\le \epsilon$.
\end{lemma}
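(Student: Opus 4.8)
The plan is to recognize this as the textbook linear-convergence guarantee for gradient descent on a function that is simultaneously $\mu$-strongly convex and $L$-smooth, and to obtain the contraction directly in the iterate norm rather than in function value. First I would observe that the hypothesis on the normalized Bregman divergence is, for a twice-differentiable $f$, equivalent to the Hessian bound $\mu I \preceq \nabla^2 f(x) \preceq L I$ at every $x$: applying Taylor's theorem to $f(x') - f(x) - \nabla f(x)\cdot(x'-x)$ and taking $x' \to x$ along a unit direction $\hat u$ shows the given ratio converges to $\tfrac12 \hat u^\top \nabla^2 f(x)\hat u$, so the stated bounds pin every eigenvalue of the Hessian into $[\mu, L]$.

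Next I would rewrite the gradient-descent step. Writing the update $x(t+1) = x(t) - r\,\nabla f(x(t))$ and subtracting the optimum, I would use $\nabla f(x^*) = 0$ together with the fundamental theorem of calculus to express
\begin{equation}
\nabla f(x(t)) = \nabla f(x(t)) - \nabla f(x^*) = H(t)\,(x(t)-x^*),\qquad H(t) := \int_0^1 \nabla^2 f\big(x^* + s(x(t)-x^*)\big)\,\mathrm{d}s .
\end{equation}
Because $H(t)$ is a symmetric average of Hessians each satisfying $\mu I \preceq \nabla^2 f \preceq L I$, it obeys the same bound $\mu I \preceq H(t) \preceq L I$. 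Substituting gives the exact linear recursion $x(t+1)-x^* = (I - r H(t))(x(t)-x^*)$.

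With the prescribed step size $r = 1/L$, the symmetric operator $I - H(t)/L$ has every eigenvalue in $[1 - L/L,\, 1 - \mu/L] = [0,\, 1-\mu/L]$, so its spectral norm is at most $1 - \mu/L$. This yields the per-step contraction $\|x(t+1)-x^*\| \le (1-\mu/L)\|x(t)-x^*\|$, and iterating from $t=0$ gives the claimed bound. The iteration-count statement then follows from $(1-\mu/L)^t \le e^{-\mu t/L}$: forcing the right-hand side of $(1-\mu/L)^t\|x(0)-x^*\|$ below $\epsilon$ requires $t \ge (L/\mu)\log(\|x(0)-x^*\|/\epsilon)$, i.e. $t \in O\big((L/\mu)\log(\|x(0)-x^*\|/\epsilon)\big)$.

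The only genuinely delicate point is the very first step---translating the stated normalized-Bregman hypothesis into the Hessian spectral bound---so I would take care to note the standard equivalence (and flag that the natural normalization of the divergence is by $\|x'-x\|^2$ rather than $\|x'-x\|$). Everything after that is a mechanical spectral estimate in which the choice $r=1/L$ is exactly what makes the upper eigenvalue of $I - rH(t)$ nonnegative while keeping the contraction factor at $1-\mu/L$. Since the result is classical, I would otherwise simply cite~\cite{boyd2004convex} and present the averaged-Hessian argument as the self-contained route.
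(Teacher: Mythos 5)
Your proof is correct, and it is worth noting that the paper itself offers no argument at all for this lemma: it simply declares it well known and cites~\cite{boyd2004convex}. Your averaged-Hessian route is therefore a genuine addition rather than a reproduction. It is also, arguably, the right route for this exact statement: the standard textbook analysis (the one in the cited reference) proves a per-step decrease in \emph{function value}, $f(x(t+1))-f(x^*)\le (1-\mu/L)\,(f(x(t))-f(x^*))$, via the descent inequality $f(x^+)\le f(x)-\tfrac{1}{2L}\|\nabla f(x)\|^2$ combined with $\|\nabla f(x)\|^2\ge 2\mu\,(f(x)-f(x^*))$, and converting that to iterate distance through strong convexity yields only $\|x(t)-x^*\|\le \sqrt{L/\mu}\,\bigl(1-\mu/L\bigr)^{t/2}\|x(0)-x^*\|$, i.e.\ a square-root rate with a condition-number prefactor --- not literally the inequality claimed in the lemma. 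Your argument, writing $\nabla f(x(t))=H(t)\,(x(t)-x^*)$ with $H(t)=\int_0^1 \nabla^2 f\bigl(x^*+s(x(t)-x^*)\bigr)\,\mathrm{d}s$ and bounding the spectrum of $I-H(t)/L$ in $[0,1-\mu/L]$, delivers exactly the stated per-iterate contraction with no prefactor. The price is the extra (implicit) assumption that $f$ is twice differentiable, which the lemma does not state but which is harmless in the paper's application; the citation route avoids $C^2$ but proves a formally weaker bound. You were also right to flag that the hypothesis as printed normalizes the Bregman divergence by $\|x'-x\|^{-1}$ rather than $\|x'-x\|^{-2}$; the latter is clearly what is intended (otherwise the hypothesis is not even scale-consistent), and your reading is the only one under which the constants $\mu$ and $L$ have their standard meaning. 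The closing complexity estimate via $(1-\mu/L)^t\le e^{-\mu t/L}$ is exactly right (strictly, $t\ge (L/\mu)\log(\|x(0)-x^*\|/\epsilon)$ \emph{suffices} rather than is required, but the $O(\cdot)$ claim is unaffected).
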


Using Lemma~\ref{lem:gradDescent} we can then bound the number of iterations of gradient descent that we will need to find a local optima for the training objective function.

\begin{theorem}
\label{thm:Complexity}
Let $f(\omega):=\lim_{\lambda\rightarrow \infty}\frac{1}{K} \sum_{k=1}^K {\rm Tr}\left( \frac{P_{\ell_k} e^{H_k}}{ {\rm Tr}~ e^{H_k}}\right)$ such that $f$ satisfies the requirements of~Lemma~\ref{lem:gradDescent}.  If $\omega^*$ represents the optimal weights for the Harmony operator $\cH$ within a compact region where $f$ is strongly convex and we assume there exists $\mathcal{L}\in \mathbb{R}$ such that
$\|\nabla f(\omega) - \nabla f(\omega')\| \le \mathcal{L} \|\omega - \omega'\|~\forall~\omega,\omega'$ then under the assumptions of~Corollary~\ref{cor:comp} the number of queries to a method that prepares the training data and prepares the data to find $\omega'$ such that $\|\omega' - \omega^*\|\le \epsilon$ is in
 $$
 \widetilde{O}\left(\frac{D^2}{\epsilon^2} \left(\frac{\|\omega^*-\omega_0\|}{\epsilon} \right)^{\frac{\log(1+\mathcal{L}/L)}{ \log(L/(L-\mu))}} \right);
 $$
here we assume $\mathcal{L},L$ and $\mu$ are constants and $\widetilde{O}(\cdot)$ means $O(\cdot)$ but neglecting sub-polynomial factors.
\end{theorem}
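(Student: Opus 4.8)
The plan is to run gradient descent on $f$, but with each exact gradient $\nabla f(\omega_t)$ replaced by the approximate gradient supplied by Corollary~\ref{cor:comp}, and then to pay for the resulting estimation error by tightening the per-step accuracy as the iteration proceeds. Concretely, at iteration $t$ I would first recast each $\cH_i$ in the Pauli basis (as described above) so that it is unitary and Hermitian, apply Corollary~\ref{cor:comp} to obtain an estimate $\hat{g}_t$ with $\|\hat{g}_t-\nabla f(\omega_t)\|\le \delta_t$ at query cost $O(D^2/\delta_t^2)$, and update $\omega_{t+1}=\omega_t-\hat{g}_t/L$. The total query count is then $\sum_t O(D^2/\delta_t^2)$, so the entire argument reduces to fixing the iteration budget, choosing the accuracy schedule $\delta_t$, and summing.

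First I would use Lemma~\ref{lem:gradDescent} to set the iteration budget: the exact descent map contracts the distance to $\omega^*$ by $1-\mu/L$ per step, so $T=O\!\left((L/\mu)\log(\|\omega^*-\omega_0\|/\epsilon)\right)$ steps --- equivalently $T\approx \log(\|\omega^*-\omega_0\|/\epsilon)/\log\!\big(L/(L-\mu)\big)$ --- bring the noiseless trajectory within $\epsilon/2$ of $\omega^*$. Next I would bound how the estimation errors propagate. Writing $\omega_t$ as the exact trajectory plus an accumulated perturbation driven by $\{\hat{g}_s-\nabla f(\omega_s)\}_{s<t}$, the key estimate is the Lipschitz constant of one inexact step: since $\nabla f$ is $\mathcal{L}$-Lipschitz while the step length is $1/L$, a triangle-inequality bound gives $\|\,(\omega-\tfrac1L\nabla f(\omega))-(\omega'-\tfrac1L\nabla f(\omega'))\|\le (1+\mathcal{L}/L)\|\omega-\omega'\|$. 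Thus a gradient error injected at step $s$ can be amplified by a factor $(1+\mathcal{L}/L)$ per subsequent step. This is precisely where the global constant $\mathcal{L}$ --- which may exceed the local smoothness $L$ that governs the contraction --- enters the final bound.

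Collecting these, to guarantee $\|\omega_T-\omega^*\|\le\epsilon$ it suffices to hold the accumulated error to $\epsilon/2$, which forces the schedule to tighten geometrically, the earliest iterations requiring the finest accuracy $\delta_t\sim \epsilon\,(1+\mathcal{L}/L)^{-(T-t)}$. Substituting into the per-step cost $O(D^2/\delta_t^2)$ and summing yields a geometric series of value $O\!\big(D^2\epsilon^{-2}\,(1+\mathcal{L}/L)^{\Theta(T)}\big)$, the dominant terms being the most accurate ones. Finally I would convert the amplification factor into the stated form using the iteration count: since $T\approx \log(\|\omega^*-\omega_0\|/\epsilon)/\log(L/(L-\mu))$, the algebraic identity
\[
(1+\mathcal{L}/L)^{T}=\big(\|\omega^*-\omega_0\|/\epsilon\big)^{\log(1+\mathcal{L}/L)/\log(L/(L-\mu))}
\]
(for the relevant multiple of $T$) reproduces the exponent in the theorem, the base $D^2/\epsilon^2$ being the single-gradient cost of Corollary~\ref{cor:comp} at accuracy $\Theta(\epsilon)$. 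The residual $\log(\|\omega^*-\omega_0\|/\epsilon)$ iteration-count factor and any other sub-polynomial overhead are absorbed into $\widetilde{O}(\cdot)$, and the constancy assumptions on $\mathcal{L},L,\mu$ keep the prefactors polynomial.

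The main obstacle is the error-propagation step: one must show that the schedule $\delta_t$ chosen to control the accumulated perturbation is simultaneously small enough that the perturbed recursion $d_{t+1}\le(1-\mu/L)\,d_t+\delta_t/L$ still contracts at essentially the Lemma~\ref{lem:gradDescent} rate, so that the iteration count is not inflated beyond $T$. A second subtlety is that setting the schedule implicitly requires a handle on the current distance to $\omega^*$; relating the measurable gradient norm to that distance costs a factor of the condition number $\mathcal{L}/\mu$, and carefully tracking this conversion through all $T$ steps is what pins down the precise constant in the exponent. Everything else --- the per-gradient cost from Corollary~\ref{cor:comp}, the noiseless iteration count from Lemma~\ref{lem:gradDescent}, and the geometric summation --- is routine once the single-step amplification constant $1+\mathcal{L}/L$ is established.
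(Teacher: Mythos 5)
Your proposal follows essentially the same route as the paper's proof: inexact gradient ascent/descent with step size $1/L$, the same single-step error-amplification constant $(1+\mathcal{L}/L)$ obtained from the Lipschitz bound on $\nabla f$, the iteration count $T\approx\log(\|\omega^*-\omega_0\|/\epsilon)/\log(L/(L-\mu))$ from Lemma~\ref{lem:gradDescent}, and the same identity converting $(1+\mathcal{L}/L)^{T}$ into the stated power of $\|\omega^*-\omega_0\|/\epsilon$. The only differences are cosmetic: the paper uses a single uniform gradient accuracy $\delta$ (set to roughly your finest $\delta_t$) rather than an adaptive schedule, which yields the same asymptotic cost since the finest-accuracy steps dominate your geometric sum, and the paper additionally boosts the $2/3$ success probability of each gradient call to $1-1/(3k)$ via $O(\log k)$ Chernoff-bound repetitions --- a sub-polynomial overhead you correctly absorb into $\widetilde{O}(\cdot)$.
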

\begin{proof}
Let $\widetilde{\nabla f}$ be the numerical approximation to the gradient taken at a point and let $\tilde{\omega}_p$ be the approximation to $\omega_p$ that arises due to inexact gradient calculation.  By assumption the update rule used in the gradient asccent algorithm is $\omega_{p+1} = \omega_p + \nabla f(\omega_p)/L$ and $\tilde{\omega}_{p+1} = \tilde{\omega}_p + \widetilde{\nabla f}(\tilde{\omega}_p)/L$ We then have that if $\omega_0 = \tilde{\omega}_0$ then for any $p>0$ we have from the triangle inequality that if the gradients are computed such that $\| \nabla f(\tilde{\omega}_{p}) - \widetilde{\nabla f}(\tilde{\omega}_p)\|\le \delta$
\begin{align}
\|\omega_{p+1} - \tilde{\omega}_{p+1}\| &\le \|\omega_{p} - \tilde{\omega}_{p}\|+\frac{1}{L} \|\nabla f(\omega_p) - \nabla f(\tilde{\omega}_p)\| + \frac{1}{L} \|\nabla f(\tilde{\omega}_p) - \widetilde{\nabla f}(\tilde{\omega}_p)\|\nonumber\\
&\le \left( 1 + \frac{\mathcal{L}}{L}\right)\|\omega_{p} - \tilde{\omega}_{p}\|+\frac{\delta}{L}.
\end{align}
It then follows inductively from the initial condition $\omega_0 = \tilde{\omega}_0$ that
\begin{equation}
\|\omega_{k} - \tilde{\omega}_{k}\| \le \frac{\delta}{L} \sum_{p=0}^{k-1} \left( 1 + \frac{\mathcal{L}}{L}\right)^p= \frac{\delta}{\mathcal{L}} \left(\left(1+\frac{\mathcal{L}}{L}\right)^k-1 \right) \le \frac{\delta}{\mathcal{L}} \left( 1 + \frac{\mathcal{L}}{L}\right)^k.
\end{equation}
We therefore also have from the triangle inequality that
\begin{equation}
\|\omega^* - \tilde{\omega}_k\| \le \|\omega^* - \omega_k\| + \|\omega_{k} - \tilde{\omega}_{k}\| \le \left(1 - \frac{\mu}{L} \right)^k\|\omega_0 - \omega^*\|+ \frac{\delta}{\mathcal{L}} \left( 1 + \frac{\mathcal{L}}{L}\right)^k.
\end{equation}
In order to ensure that the overall error is at most $\epsilon$ we choose both contributions to be at most $\epsilon/2$.  Elementary algebra then shows that it suffices to choose
\begin{align}
k &= \left \lceil \frac{\log \left(\frac{2\|\omega_0 - \omega^*\|}{\epsilon} \right)}{\log \left( \frac{L}{L-\mu}\right)} \right\rceil \le 1 +\frac{\log \left(\frac{2\|\omega_0 - \omega^*\|}{\epsilon} \right)}{\log \left( \frac{L}{L-\mu}\right)} \in O\left(\log\left(\frac{\|\omega^* - \omega_0\|}{\epsilon} \right) \right),\nonumber\\
\delta &= \frac{\epsilon \mathcal{L}}{2} \left(1 + \frac{\mathcal{L}}{L} \right)^{-1}\left(\frac{\epsilon}{2\|\omega^* - \omega_0\|} \right)^{\frac{\log(1+\mathcal{L}/L)}{ \log(L/(L-\mu))}}\in O\left( \frac{\epsilon^{1+{\frac{\log(1+\mathcal{L}/L)}{ \log(L/(L-\mu))}}}}{\|\omega^* - \omega_0\|^{\frac{\log(1+\mathcal{L}/L)}{ \log(L/(L-\mu))}}} \right),\label{eq:kdScale}
\end{align}
where we have assumed $\mathcal{L}, L$ and $\mu$ are the constants defined above.

If we use the result of Corollary~\ref{cor:comp} then the total number of iterations needed is $O(D^2/\delta^2)$.  However, the probability of success for this process is at least $2/3$.  This means that, if the probability of each derivative failing can be reduced to $1/3k$, then the probability of success of the protocol is at least $2/3$.  From the Chernoff bound, this can be achieved using $O(\log(k))$ repetitions of each gradient calculation.  Thus the total number of samples scales as
\begin{equation}
N_{\rm samp} \in O(k\log(k) D^2 /\delta^2)\subseteq \widetilde{O}\left( k D^2/\delta^2\right).\label{eq:sampScale}
\end{equation}
Hence it follows from~\eqref{eq:kdScale} and~\eqref{eq:sampScale} that  the overall number of samples needed for the entire protocol scales as
\begin{equation}
N_{\rm samp} \in \widetilde{O}\left(\frac{D^2}{\epsilon^2} \left(\frac{\|\omega^*-\omega_0\|}{\epsilon} \right)^{2\frac{\log(1+\mathcal{L}/L)}{ \log(L/(L-\mu))}} \right).
\end{equation}
\end{proof}

This result shows that the number of samples needed in the training process scales inverse polynomially with the target uncertainty, and this holds even under the worst case assumption that deviations in the gradient calculations lead to exponentially diverging solutions.  Also, as this result does not explicitly depend on the form of the function $f$ (apart from guarantees about its \check{convexity and smoothness}) the result also holds generally for \check{supervised} training of quantum Boltzmann machines.

\section{Conclusion}
In this work we examine the question of how we can best fit certain problems in computational linguistics onto quantum computers.  In doing so, we propose a new formalism for representing language processing called Fock-space representations that have the advantage of being easily encoded in a small number of qubits (unlike tensor-product representations).  We then develop a formalism for harmonic grammars in this representation and show how to generalize it beyond the case of classical grammars.  This quantum case we find is related to quantum error correcting codes and furthermore cannot be efficiently solved on a classical computer unless $\P=\BQP$ meaning that our results can potentially offer exponential speedups for evaluating quantum Harmony operators unless classical computers are at most polynomially weaker than quantum computers.  In doing so, we also provide new methods for training quantum Boltzmann machines that may be of value independent of the current applications to language processing.  Finally, we illustrate the utility of Fock-space representations by showing how they, in concert with classical optimization methods, can be used to parse sentences in relatively complicated grammars very quickly on ordinary computers.

Looking forward there are a number of questions that are revealed by our work.  Perhaps the most evocative question is that of whether quantum harmonic grammars can provide better models for classes of natural language than classical grammars can.  Such work is challenging because these quantum models require exponential time to train on classical computers.  Thus testing the impact that quantum models will have for classical language processing will need to wait for quantum computing to mature.

Further work is also needed to understand the impact that Fock-space representations will have for computational linguistics using classical computers.  Based on our numerical studies on toy grammars, we have seen that this approach seems to be quite promising as a means to do language processing but it remains to be tested whether these methods will also perform well when applied to natural language.  It is our hope that while these ideas were borne out of a desire to investigate the role that quantum computing may play in computational linguistics, they will nonetheless feed back into the classical community and provide new quantum inspired methods for machine translation and natural language processing.

%\begin{acknowledgements}
%We would like to thank Matthias Troyer and Krysta Svore for useful feedback and comments.
%\end{acknowledgements}
\bibliographystyle{unsrt}
\bibliography{biblio}

\begin{thebibliography}{10}

\bibitem{smolensky2006harmonic}
Paul Smolensky and G{\'e}raldine Legendre.
\newblock {\em The harmonic mind: From neural computation to
  {Optimality-Theoretic} grammar. 2 vols}.
\newblock Cambridge, Mass.: MIT Press, 2006.

\bibitem{smolensky2014optimization}
Paul Smolensky, Matthew Goldrick, and Donald Mathis.
\newblock Optimization and quantization in {Gradient Symbol Systems}: A
  framework for integrating the continuous and the discrete in cognition.
\newblock {\em Cognitive Science}, 38(6):1102--1138, 2014.

\bibitem{shor1999polynomial}
Peter~W Shor.
\newblock Polynomial-time algorithms for prime factorization and discrete
  logarithms on a quantum computer.
\newblock {\em SIAM review}, 41(2):303--332, 1999.

\bibitem{lloyd1996universal}
Seth Lloyd.
\newblock Universal quantum simulators.
\newblock {\em Science}, pages 1073--1078, 1996.

\bibitem{whitfield2011simulation}
James~D Whitfield, Jacob Biamonte, and Al{\'a}n Aspuru-Guzik.
\newblock Simulation of electronic structure hamiltonians using quantum
  computers.
\newblock {\em Molecular Physics}, 109(5):735--750, 2011.

\bibitem{jordan2012quantum}
Stephen~P Jordan, Keith~SM Lee, and John Preskill.
\newblock Quantum algorithms for quantum field theories.
\newblock {\em Science}, 336(6085):1130--1133, 2012.

\bibitem{aimeur2006machine}
Esma A{\"\i}meur, Gilles Brassard, and S{\'e}bastien Gambs.
\newblock Machine learning in a quantum world.
\newblock In {\em Conference of the Canadian Society for Computational Studies
  of Intelligence}, pages 431--442. Springer, 2006.

\bibitem{rebentrost2014quantum}
Patrick Rebentrost, Masoud Mohseni, and Seth Lloyd.
\newblock Quantum support vector machine for big data classification.
\newblock {\em Physical review letters}, 113(13):130503, 2014.

\bibitem{wiebe2016quantum}
Nathan Wiebe, Ashish Kapoor, and Krysta~M Svore.
\newblock Quantum deep learning.
\newblock {\em Quantum Information \& Computation}, 16(7-8):541--587, 2016.

\bibitem{kerenidis2017quantum}
Iordanis Kerenidis and Anupam Prakash.
\newblock Quantum gradient descent for linear systems and least squares.
\newblock {\em arXiv preprint arXiv:1704.04992}, 2017.

\bibitem{gilyen2017optimizing}
Andr{\'a}s Gily{\'e}n, Srinivasan Arunachalam, and Nathan Wiebe.
\newblock Optimizing quantum optimization algorithms via faster quantum
  gradient computation.
\newblock {\em arXiv preprint arXiv:1711.00465}, 2017.

\bibitem{brassard2002quantum}
Gilles Brassard, Peter Hoyer, Michele Mosca, and Alain Tapp.
\newblock Quantum amplitude amplification and estimation.
\newblock {\em Contemporary Mathematics}, 305:53--74, 2002.

\bibitem{smolensky1990tensor}
Paul Smolensky.
\newblock Tensor product variable binding and the representation of symbolic
  structures in connectionist systems.
\newblock {\em Artificial intelligence}, 46(1--2):159--216, 1990.

\bibitem{jurafsky2014speech}
Dan Jurafsky and James~H Martin.
\newblock {\em Speech and language processing}, volume~3.
\newblock Pearson London, 2 edition, 2014.

\bibitem{smolensky2012symbolic}
Paul Smolensky.
\newblock Symbolic functions from neural computation.
\newblock {\em Philosophical Transactions of the Royal Society -- A:
  Mathematical, Physical and Engineering Sciences, in press}, 2012.

\bibitem{hopfield1982neural}
John~J Hopfield.
\newblock Neural networks and physical systems with emergent collective
  computational abilities.
\newblock {\em Proceedings of the National Academy of Sciences},
  79(8):2554--2558, 1982.

\bibitem{smolensky1986information}
Paul Smolensky.
\newblock Information processing in dynamical systems: Foundations of {Harmony
  Theory}.
\newblock In {\em Parallel distributed processing: Explorations in the
  microstructure of cognition, vol. 1}, pages 194--281. MIT Press, 1986.

\bibitem{legendre1990harmonic}
G{\'e}raldine Legendre, Yoshiro Miyata, and Paul Smolensky.
\newblock Harmonic grammar --- a formal multi-level connectionist theory of
  linguistic well-formedness: Theoretical foundations.
\newblock In {\em {Proceedings of the 12th Meeting of the Cognitive Science
  Society}}, pages 388--395, 1990.

\bibitem{GrimshawSamek98}
Jane Grimshaw and Vieri Samek-Lodovici.
\newblock Optimal subjects and subject universals.
\newblock In P.~Barbosa, D.~Fox, P.~Hagstrom, M.~McGinnis, and D.~Pesetsky,
  editors, {\em Is the Best Good Enough? {O}ptimality and Competition in
  Syntax}, pages 193--219. MIT Press, Cambridge, MA, 1998.

\bibitem{pater2009weighted}
Joe Pater.
\newblock Weighted constraints in generative linguistics.
\newblock {\em Cognitive Science}, 33(6):999--1035, 2009.

\bibitem{prince1993optimality}
Alan Prince and Paul Smolensky.
\newblock {Optimality Theory}: Constraint interaction in generative grammar.
\newblock Technical report, Rutgers University and University of Colorado at
  Boulder, 1993, 1993/2004.
\newblock Rutgers Optimality Archive 537, 2002. Revised version published by
  Blackwell, 2004.

\bibitem{prince1997optimality}
Alan Prince and Paul Smolensky.
\newblock Optimality: From neural networks to universal grammar.
\newblock {\em Science}, 275(5306):1604--1610, 1997.

\bibitem{ackley1985learning}
David~H Ackley, Geoffrey~E Hinton, and Terrence~J Sejnowski.
\newblock A learning algorithm for {Boltzmann} machines.
\newblock {\em Cognitive science}, 9(1):147--169, 1985.

\bibitem{cho2017incremental}
Pyeong~Whan Cho, Matthew Goldrick, and Paul Smolensky.
\newblock Incremental parsing in a continuous dynamical system: Sentence
  processing in {Gradient Symbolic Computation}.
\newblock {\em Linguistics Vanguard}, 3, 2017.

\bibitem{smolensky1993harmonic}
Paul Smolensky.
\newblock Harmonic grammars for formal languages.
\newblock In {\em Advances in {Neural Information Processing Systems} 5}, pages
  847--854, 1993.

\bibitem{Gallavotti1999}
Giovanni Gallavotti.
\newblock {\em Statistical Mechanics: A Short Treatise}.
\newblock Springer, 1999.

\bibitem{potts1952}
Renfrey~B. Potts.
\newblock Some generalized order-disorder transformations.
\newblock {\em Mathematical Proceedings}, 48(1):106 --109, 1952.

\bibitem{gilyen2018quantum}
Andr{\'a}s Gily{\'e}n, Seth Lloyd, and Ewin Tang.
\newblock Quantum-inspired low-rank stochastic regression with logarithmic
  dependence on the dimension.
\newblock {\em arXiv preprint arXiv:1811.04909}, 2018.

\bibitem{tang2018quantum}
Ewin Tang.
\newblock Quantum-inspired classical algorithms for principal component
  analysis and supervised clustering.
\newblock {\em arXiv preprint arXiv:1811.00414}, 2018.

\bibitem{tang2018quantum2}
Ewin Tang.
\newblock A quantum-inspired classical algorithm for recommendation systems.
\newblock {\em arXiv preprint arXiv:1807.04271}, 2018.

\bibitem{wiebe2015quantum}
Nathan Wiebe, Ashish Kapoor, and Krysta~M Svore.
\newblock Quantum algorithms for nearest-neighbor methods for supervised and
  unsupervised learning.
\newblock {\em Quantum Information and Computation}, 15(3-4):316--356, 2015.

\bibitem{wiebe2015quantum2}
Nathan Wiebe, Ashish Kapoor, Christopher Granade, and Krysta~M Svore.
\newblock Quantum inspired training for boltzmann machines.
\newblock {\em arXiv preprint arXiv:1507.02642}, 2015.

\bibitem{farhi2000quantum}
Edward Farhi, Jeffrey Goldstone, Sam Gutmann, and Michael Sipser.
\newblock Quantum computation by adiabatic evolution.
\newblock {\em arXiv preprint quant-ph/0001106}, 2000.

\bibitem{GOSSET}
David Gosset, Barbara~M Terhal, and Anna Vershynina.
\newblock Universal adiabatic quantum computation via the space-time
  circuit-to-hamiltonian construction.
\newblock {\em Physical review letters}, 114(14):140501, 2015.

\bibitem{berry2007efficient}
Dominic~W Berry, Graeme Ahokas, Richard Cleve, and Barry~C Sanders.
\newblock Efficient quantum algorithms for simulating sparse hamiltonians.
\newblock {\em Communications in Mathematical Physics}, 270(2):359--371, 2007.

\bibitem{jordan2014quantum}
Stephen~P Jordan, Keith~SM Lee, and John Preskill.
\newblock Quantum algorithms for fermionic quantum field theories.
\newblock {\em arXiv preprint arXiv:1404.7115}, 2014.

\bibitem{kieferova2017tomography}
M{\'a}ria Kieferov{\'a} and Nathan Wiebe.
\newblock Tomography and generative training with quantum {Boltzmann} machines.
\newblock {\em Physical Review A}, 96(6):062327, 2017.

\bibitem{somma2008quantum}
RD~Somma, S~Boixo, Howard Barnum, and E~Knill.
\newblock Quantum simulations of classical annealing processes.
\newblock {\em Physical review letters}, 101(13):130504, 2008.

\bibitem{stroock2013introduction}
Daniel~W Stroock.
\newblock {\em An introduction to Markov processes}, volume 230.
\newblock Springer Science \& Business Media, 2013.

\bibitem{kirkpatrick1983optimization}
Scott Kirkpatrick, C~Daniel Gelatt, and Mario~P Vecchi.
\newblock Optimization by simulated annealing.
\newblock {\em Science}, 220(4598):671--680, 1983.

\bibitem{kadowaki1998quantum}
Tadashi Kadowaki and Hidetoshi Nishimori.
\newblock Quantum annealing in the transverse ising model.
\newblock {\em Physical Review E}, 58(5):5355, 1998.

\bibitem{glover2006handbook}
Fred~W Glover and Gary~A Kochenberger.
\newblock {\em Handbook of metaheuristics}, volume~57.
\newblock Springer Science \& Business Media, 2006.

\bibitem{amin2018quantum}
Mohammad~H Amin, Evgeny Andriyash, Jason Rolfe, Bohdan Kulchytskyy, and Roger
  Melko.
\newblock Quantum boltzmann machine.
\newblock {\em Physical Review X}, 8(2):021050, 2018.

\bibitem{boyd2004convex}
Stephen Boyd and Lieven Vandenberghe.
\newblock {\em Convex optimization}.
\newblock Cambridge University Press, 2004.

\end{thebibliography}

\end{document}